\documentclass[journal,twoside,web]{ieeecolor}
\usepackage{lcsys}
\usepackage{cite}
\usepackage{amsmath,amssymb,amsfonts}
\usepackage{algorithmic}
\usepackage{graphicx}
\usepackage{textcomp}
\usepackage[dvipsnames]{xcolor}

\makeatletter
\let\proof\@undefined                        
\let\endproof\@undefined                  
\makeatother
\usepackage{amsthm}
\usepackage{multirow}
\usepackage{dsfont}
\DeclareFontFamily{OT1}{pzc}{}
\DeclareFontShape{OT1}{pzc}{m}{it}{<-> s * [1.10] pzcmi7t}{}
\DeclareMathAlphabet{\pzc}{OT1}{pzc}{m}{it}
\newcommand\Item[1][]{%
  \ifx\relax#1\relax  \item \else \item[#1] \fi
  \abovedisplayskip=0pt\abovedisplayshortskip=0pt~\vspace*{-\baselineskip}}

\usepackage{amssymb,mathtools} 
\usepackage{graphicx,subcaption}    
\usepackage{pgfplots}
\usepgfplotslibrary{colormaps,colorbrewer,fillbetween,patchplots}
\usetikzlibrary{shapes.multipart}
\usetikzlibrary[pgfplots.groupplots]
\newcommand{\ignore}[1]{}
\pgfplotsset{compat=newest,tick label style={font=\small},
	every axis/.append style={label style={font=\small}}}
	
\pgfdeclarelayer{background}
\pgfsetlayers{background,main}
	
\tikzset{dot/.style={circle,inner sep=1.5pt},
	nodabove/.style={font=\scriptsize,above},nodbelow/.style={font=\scriptsize,below},}

\def\BibTeX{{\rm B\kern-.05em{\sc i\kern-.025em b}\kern-.08em
		T\kern-.1667em\lower.7ex\hbox{E}\kern-.125emX}}

\newcommand{\Ccal}{\mathcal{C}}
\newcommand{\Xcal}{\mathcal{X}}
\newcommand{\Acal}{\mathcal{A}}
\newcommand{\Bcal}{\mathcal{B}}
\newcommand{\Dcal}{\mathcal{D}}

\newcommand{\Rcal}{\mathcal{R}}

\newcommand{\Ical}{\mathcal{I}}
\newcommand{\diam}{\mathrm{diam}}

\newtheorem{thm}{Theorem}

\newtheorem{lem}{Lemma}

\newtheorem{prop}{Proposition}

\newtheorem{defn}{Definition}
\newtheorem{exmp}{Example}

\overrideIEEEmargins 

\title{\LARGE \bf 
Sensitivity Analysis for Piecewise-Affine Approximations of Nonlinear Programs with Polytopic Constraints}

\author{Leila Gharavi, Changrui Liu, Bart De Schutter \textit{Fellow IEEE}, Simone Baldi \textit{Senior Member IEEE}
\thanks{This research is supported by the Dutch Science Foundation NWO-TTW within the EVOLVE project (no.\ 18484), and by the 
European Research Council (ERC) under the European Union’s Horizon 
2020 research and innovation programme (Grant agreement No. 101018826 - CLariNet)(Corresponding authors: L.\ Gharavi and S.\ Baldi)}
\thanks{L.\ Gharavi, C.\ Liu and B.\ De~Schutter are with the Delft Center for Systems and Control, Delft University of Technology, Delft, The Netherlands.
        {\tt\small \{L.Gharavi,C.Liu-14,B.DeSchutter\}@tudelft.nl}}
\thanks{S.\ Baldi is with the School of Mathematics, Southeast University, Nanjing, China. {\tt\small 103009004@seu.edu.cn}}}

\begin{document}

\maketitle
\thispagestyle{empty}
\pagestyle{empty}

\begin{abstract}

Nonlinear Programs (NLPs) are prevalent in optimization-based control of nonlinear systems. Solving general NLPs is computationally expensive, necessitating the development of fast hardware or tractable suboptimal approximations.
This paper investigates the sensitivity of the solutions of NLPs with polytopic constraints when the nonlinear continuous objective function is approximated by a PieceWise-Affine (PWA) counterpart.
By leveraging perturbation analysis using a convex modulus, we derive guaranteed bounds on the distance between the optimal solution of the original polytopically-constrained NLP and that of its approximated formulation. Our approach aids in determining criteria for achieving desired solution bounds. Two case studies on the Eggholder function and nonlinear model predictive control of an inverted pendulum demonstrate the theoretical results.

\end{abstract}

\begin{IEEEkeywords}
Perturbation Analysis, Non-Convex Nonlinear Programming, Piecewise-Affine Functions, Max-Min-Plus-Scaling Systems, Function Approximation
\end{IEEEkeywords}

\section{Introduction}

\IEEEPARstart{N}{onlinear} Programs (NLPs) are commonly encountered in optimization-based control of nonlinear systems, e.g., Nonlinear Model Predictive Control (NMPC)~\cite{Gros2020}. 
Solving non-convex NLPs is intractable, posing a great challenge in applying optimization-based controllers in real-time operations, especially for systems having fast dynamics.
Various solutions have been proposed to address this issue, such as adaptive problem formulations~\cite{Griffith2018}, learning-based methods~\cite{Dong2019}, and 
sensitivity analysis of NLPs~\cite{Zavala2009}.

PWA approximations are widely used due to their tractability~\cite{Jin2021,Bemporad2011}. To obtain a continuous PWA approximation, $\min$ and $\max$ operators can be used to 
maintain continuity and to resolve numerical issues in the resulting optimization problem~\cite{Xu2012,Deschutter2004,PSMPC}. The approximated problem can be used to obtain a suboptimal solution~\cite{Chen2019}, whose optimality highly depends on the accuracy of the approximation. For example, a warm start of a non-convex NLP can be obtained by solving the approximated optimization problem~\cite{Lawrynczuk2022,Ghandriz2023}. Optimality guarantees of such approaches can be derived using sensitivity analysis, establishing an upper bound on the distance between the original solutions and the approximated ones. As a result, by finding a subset of the decision space around the approximated solution, one can sample a structured or random warm start to solve the original non-convex NLP more efficiently.  

Quantitative bounds on the distance between the original and the approximated solutions have been studied in the sensitivity analysis of quadratic~\cite{Daniel1973} and convex~\cite{Phu2010} optimization problems. Regarding NLPs, there exist several results on their sensitivity to the parameters in the optimization formulation~\cite{Grotschel2001,Subotic2022} and the initial solution~\cite{Zavala2023}. In addition, optimality and dissipativity conditions for the perturbed convex NLP problem have also been established~\cite{Pirnay2012,Muller2015}. For a more extensive study, the reader can refer to~\cite{Puranik2017}. Recently, sensitivity analysis has also been applied to analyze the infinite-horizon performance of MPC~\cite{liu2024stability}. However, obtaining quantitative bounds on the distance between the solutions of a non-convex NLP and its PWA approximation is still a gap that needs to be filled, and our work addresses this problem.

In this paper, we present a method to bound the solutions of a polytopically-constrained non-convex NLP problem using a continuous PWA approximation of the nonlinear objective function. We employ the Max-Min-Plus-Scaling (MMPS) formulation of continuous PWA functions in~\cite{Deschutter2004} to construct a piecewise convex approximation formalism. Leveraging some results in~\cite{Phu2010}, we derive guaranteed bounds on the distance between the original and the approximated optimal solution. Moreover, our approach can not only establish such bounds but also aid in determining necessary criteria during the approximation stage to attain a desired solution bound. To demonstrate the theoretical findings, we present two case studies on the Eggholder function~\cite{Whitley1996}, a renowned benchmark optimization problem with multiple local minima, and an NMPC optimization problem for an inverted pendulum~\cite{Gros2020}.

The rest of the paper is organized as follows. In Section~\ref{sec:back}, we present preliminaries regarding the sensitivity analysis of NLPs and the PWA approximation of nonlinear functions. Section~\ref{sec:approx} formulates the problem and Section~\ref{sec:prob} elaborates our proposed approach to theoretically compute the confidence radius for the local minima of the corresponding approximated function. In Section~\ref{sec:case}, we then demonstrate the derived confidence radius through a case study on the Eggholder function and we apply our analysis to an NMPC optimization example. Section~\ref{sec:conc} concludes this paper.

\emph{Notation:} For a positive integer $P$, we use $\Ical_P$ to denote the set $\{1, 2, \dots, P\}$. For a connected set $\Dcal \subseteq \mathbb{R}^n$, the diameter of $\Dcal$ is defined as $\diam(\Dcal) \coloneqq \max_{x_1,x_2 \in \Dcal} \Vert x_1 - x_2 \Vert$, where $\|\cdot\|$ is the Euclidean or the 2-norm.

\newpage 
\section{Preliminaries}\label{sec:back}

\subsection{Representation of Continuous PWA Functions}

We start with formally defining a continuous PWA function which will be frequently used throughout the paper.
\begin{defn}[Continuous PWA function \cite{Chua1988}] A scalar-valued function \mbox{$f : \Dcal \subseteq \mathbb{R}^n \to \mathbb{R}$} is said to be a continuous PWA function if and only if the following conditions hold:
	\begin{enumerate}
		\item the domain space $\Dcal$ is divided into a finite number of closed polyhedral regions $\Rcal_{1}, \dots, \Rcal_{R}$ with non-overlapping interiors,
		\item for each $r \in \Ical_R$, $f$ can be expressed as 
		\[f (x) = \alpha_{r}^T x + \beta_r \quad \rm{if} \quad x \in \Rcal_r,\]
		with $\alpha_{r} \in \mathbb{R}^n$ and $\beta_r \in \mathbb{R}$, and
		\item $f$ is continuous on the boundary between any pair of regions.
	\end{enumerate}	\label{def:cpwa}
\end{defn}
PWA functions can be expressed in different forms, among which MMPS form is powerful for decomposing PWA functions.
\begin{thm}[MMPS representation \cite{Deschutter2004}] For a scalar-valued continuous PWA function $f$ as in Definition~\ref{def:cpwa}, there exist non-empty index sets $\Ical_P$ and $\Ical_{Q_p}$ such that
	\begin{equation}
		f (x) = \min_{p \in \Ical_P} \max_{q \in \Ical_{Q_p}} \left(a_{p,q}^T x + b_{p,q}\right),\label{eq:fmmps}
	\end{equation}
	for 
	real numbers $b_{p,q}$ and vectors $a_{p,q} \in \mathbb{R}^n$.
\end{thm}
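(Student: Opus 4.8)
The plan is to construct the representation explicitly from the region decomposition of Definition~\ref{def:cpwa} and then verify the identity pointwise, isolating the nontrivial inequality as the crux.

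First I would take the affine pieces $\ell_r(x) = \alpha_r^T x + \beta_r$, $r \in \Ical_R$, furnished by the decomposition as the pool of candidate functions; these supply the coefficients $a_{p,q}, b_{p,q}$. I set the outer index set to $\Ical_P = \Ical_R$ and, for each region index $p$, define the inner set
\[
\Ical_{Q_p} = \{\, q \in \Ical_R : \alpha_q^T y + \beta_q \le \alpha_p^T y + \beta_p \text{ for all } y \in \Rcal_p \,\},
\]
i.e.\ all pieces lying weakly below $\ell_p$ throughout $\Rcal_p$. Since $p \in \Ical_{Q_p}$, each inner set is non-empty, and it remains to prove that $g(x) := \min_{p \in \Ical_P}\max_{q \in \Ical_{Q_p}}(\alpha_q^T x + \beta_q)$ coincides with $f$.

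The inequality $g \le f$ is immediate: for $x \in \Rcal_r$ I would evaluate the outer minimum at the single index $p = r$. Every $q \in \Ical_{Q_r}$ satisfies $\ell_q(x) \le \ell_r(x)$ because $x \in \Rcal_r$, while $r \in \Ical_{Q_r}$ attains equality; hence $\max_{q \in \Ical_{Q_r}} \ell_q(x) = \ell_r(x) = f(x)$, and a fortiori $g(x) \le f(x)$.

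The reverse inequality $g \ge f$ is the heart of the proof and the step I expect to be the main obstacle: for an arbitrary region index $p$ I must produce a witness $q \in \Ical_{Q_p}$ with $\ell_q(x) \ge f(x)$. A one-line choice does not suffice, since in general the witness is neither $p$ nor the index $r$ of the region containing $x$, but an intermediate piece. My plan is a continuity argument along a path: fix an interior point $w$ of $\Rcal_p$ and follow the segment $\gamma(t) = (1-t)w + t x$, which meets only finitely many regions $\Rcal_p = R^{(0)}, R^{(1)}, \dots, R^{(m)} \ni x$ with pieces $\ell_p = \ell^{(0)}, \dots, \ell^{(m)} = \ell_r$. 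Continuity of $f$ forces consecutive pieces to agree at each crossing point, so the active piece of $f$ along $\gamma$ varies continuously. I would then select the last piece $\ell^{(k^\ast)}$ along this chain that still lies below $\ell_p$ on all of $\Rcal_p$, and argue that $\ell^{(k^\ast)}(x) \ge f(x)$; the technical engine is that agreement of adjacent pieces on a shared facet, combined with the convexity of $\Rcal_p$, upgrades a comparison on that facet to the global condition defining membership in $\Ical_{Q_p}$. Establishing this promotion cleanly, and verifying that the chosen piece dominates at $x$, is where the real work lies. Once it is in place, $\max_{q \in \Ical_{Q_p}}\ell_q(x) \ge \ell^{(k^\ast)}(x) \ge f(x)$ holds for every $p$, giving $g \ge f$ and hence $f = g$, which is exactly the form~\eqref{eq:fmmps}.
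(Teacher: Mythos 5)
Your construction of the index sets (take $\Ical_P=\Ical_R$ and let $\Ical_{Q_p}$ collect the pieces dominated by $\ell_p$ throughout $\Rcal_p$) is the classical ``dominating pieces'' construction, and your direction $g\le f$ is complete and correct. But the reverse direction $g\ge f$ --- which is the entire content of the theorem --- is not proved: you defer exactly the two steps that constitute the proof, and the plan you sketch for them does not go through as stated. The ``promotion'' you invoke is valid only at the \emph{first} crossing: the facet through which the segment leaves $\Rcal_p$ is a facet of $\Rcal_p$, so its affine hull is a supporting hyperplane of the convex polyhedron $\Rcal_p$, and the sign of $\ell^{(1)}-\ell^{(0)}$ is indeed constant on all of $\Rcal_p$. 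At every later crossing, however, the shared facet is a facet of $R^{(s-1)}$ and $R^{(s)}$, not of $\Rcal_p$; its affine hull can cut straight through $\Rcal_p$ (picture $\Rcal_p=\{x_1\le 0\}$, $R^{(1)}=\{x_1\ge 0,\,x_2\ge 0\}$, $R^{(2)}=\{x_1\ge 0,\,x_2\le 0\}$: the hyperplane $\{x_2=0\}$ separating $R^{(1)}$ from $R^{(2)}$ bisects $\Rcal_p$). Hence the pairwise comparison $\ell^{(s)}\lessgtr\ell^{(s-1)}$ holds only on a half-space that need not contain $\Rcal_p$, and no chaining of such comparisons yields the global condition $\ell^{(s)}\le\ell_p$ on $\Rcal_p$ defining membership in $\Ical_{Q_p}$. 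Likewise, nothing in the sketch shows that your selected ``last piece below $\ell_p$ on all of $\Rcal_p$'' dominates $f$ at $x$; a priori the only chain piece lying in $\Ical_{Q_p}$ could be $\ell_p$ itself, in which case your claim degenerates to $\ell_p(x)\ge f(x)$, which is precisely the kind of assertion the theorem is supposed to establish, not assume.

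For calibration: the paper itself does not prove this theorem; it imports it from the cited reference, and the argument there is both different and much lighter, precisely because the statement does \emph{not} require the affine functions $a_{p,q}^Tx+b_{p,q}$ to come from the original pieces $\alpha_r^Tx+\beta_r$. The standard route is through a difference-of-convex decomposition: let $\lambda_m$ be affine functionals vanishing on the hyperplanes spanned by the facets of the subdivision; for $C$ large enough, $g\coloneqq f+C\sum_m\vert\lambda_m\vert$ is convex PWA, since adding $C\vert\lambda_m\vert$ creates an upward kink across every facet lying in the $m$-th hyperplane while being affine near all others. Writing $g=\max_i g_i$ and $h\coloneqq C\sum_m\vert\lambda_m\vert=\max_j h_j$ as finite maxima of affine functions, one gets $f=g-h=\min_j\left(g-h_j\right)=\min_j\max_i\left(g_i-h_j\right)$, which is exactly the MMPS form \eqref{eq:fmmps}. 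Your stronger statement --- witnesses drawn from the original pieces, with region-wise domination --- is in fact true (it is Ovchinnikov's max-min representation theorem), but its proof requires considerably more machinery than your sketch; if you want the direct route, that is the argument you would need to reconstruct in full.
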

For convenience, we define the $p$-th local convex segment of $f$ as
\begin{align}
	f_p (x) \coloneqq \max\limits_{q \in \Ical_{Q_p}} \left(a_{p,q}^T x + b_{p,q}\right),\label{eq:fp}
\end{align}
where $f_p$ is convex since it is defined as the maximum of a finite number of affine functions and its domain is also convex. In addition, we define the region $\Ccal_{p,q}$ in which a certain affine function is activated and the region $\Ccal_{p,.}$ in which a convex PWA function is activated, that is,
\begin{subequations}
	\begin{align}
		\Ccal_{p,q} & \coloneqq \{ x \in \Dcal \; | \; f (x) = a_{p,q}^T x + b_{p,q}\},\\
		\Ccal_{p,.} & \coloneqq \{ x \in \Dcal \; | \; f (x) = f_p(x)\}.
	\end{align}
\end{subequations}
Further, we have
{\mbox{$\Ccal_{p,.} = \bigcup_{q=1}^{Q_p} \Ccal_{p,q}.$}
	Lastly, we define \mbox{$\sigma_p : \Ccal_{p,.} \to \Ical_{Q_p}$} as the region index function for $f_p$ as
	\begin{align}
		\sigma_p (x) = q \iff x \in \Ccal_{p,q}. \label{eq:sigma}
	\end{align}

	\subsection{Sensitivity of the Convex Optimization Problem}
	
	The convexity modulus~\cite{Phu2010}, being used to quantify convexity, is useful in the sensitivity analysis of convex functions. In the following, we define the convexity modulus specifically for $f_p$, the $p$-th convex segment of $f$, and its domain $\Ccal_{p,.}$.
	\begin{defn}[Convexity modulus \cite{Phu2010}] For a given convex function $f_p$, the convexity modulus \mbox{$h_1 : [0,+\infty) \to [0,+\infty)$} over the domain $\Ccal_{p,.}$ is defined as
		\begin{align}
			h_1 (\gamma) \coloneqq  
			\begin{cases}
				\begin{array}{ll}
					\inf\limits_{\substack{v,w \in \Ccal_{p,.}\\\Vert v - w \Vert = \gamma}} J (v,w) & \text{if }\gamma < \diam (\Ccal_{p,.})\\
					+\infty & \text{if }\gamma \geqslant \diam (\Ccal_{p,.})
				\end{array}			
			\end{cases},
			\label{eq:h1def}
		\end{align}
		where $v$ and $w$ are two points in $\Ccal_{p,.}$ satisfying \mbox{$\Vert v - w \Vert = \gamma$}, and $J(v, w)$ is given as
		\begin{align}
			J (v,w) = \dfrac{f_p(v) + f_p(w)}{2}  - f_p \left(\dfrac{v+w}{2}\right).\label{eq:jvw}
		\end{align}\label{def:h1}
	\end{defn}
	
	\begin{thm}[Theorem 4.5 in \cite{Phu2010}]\label{thm:phu} Suppose that \mbox{$f_p : \Ccal_{p,.} \to \mathbb{R}$} is a scalar-valued convex function and \mbox{$\delta_p : \Ccal_{p,.} \to \mathbb{R}$} is an arbitrary function satisfying
		\begin{align}
			\sup_{x \in \Ccal_{p,.}} \vert \delta_p(x) \vert = \Delta_p < \infty.\label{eq:deltap}
		\end{align}
		Let $x_p^\ast$ be any global infimizer of $f_p$ and $\hat{x}_p^\ast$ be any global infimizer of $\hat{f}_p = f_p + \delta_p$. Then
		\begin{align}
			\Vert \hat{x}_p^\ast - x_p^\ast \Vert \leqslant h_1^{-1} \left(2 \Delta_p \right),
			\label{eq:bound}
		\end{align}
		where $h_1$ is the convexity modulus in Definition~\ref{def:h1}.
	\end{thm}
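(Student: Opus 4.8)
The plan is to evaluate the convexity modulus $h_1$ at the two infimizers themselves and to convert the two optimality conditions into a numerical bound on $J$. First I would set $v = x_p^\ast$, $w = \hat x_p^\ast$, write $\gamma = \Vert \hat x_p^\ast - x_p^\ast \Vert$, and introduce the midpoint $m = (x_p^\ast + \hat x_p^\ast)/2$. Since $f_p$ is defined on the convex region $\Ccal_{p,.}$, the midpoint $m$ again lies in $\Ccal_{p,.}$, so that $f_p(m)$, $\delta_p(m)$ and $\hat f_p(m)$ are all defined and $m$ is admissible as a test point in both optimality inequalities.

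Next I would read off the two inequalities that the data encodes. Because $x_p^\ast$ is a global infimizer of $f_p$, testing against $m$ gives $f_p(x_p^\ast) - f_p(m) \leqslant 0$. Because $\hat x_p^\ast$ is a global infimizer of $\hat f_p = f_p + \delta_p$, the inequality $\hat f_p(\hat x_p^\ast) \leqslant \hat f_p(m)$ rearranges to $f_p(\hat x_p^\ast) - f_p(m) \leqslant \delta_p(m) - \delta_p(\hat x_p^\ast)$, which by \eqref{eq:deltap} is at most $2\Delta_p$. Substituting both into the definition \eqref{eq:jvw} of $J$ yields
\[
J(x_p^\ast, \hat x_p^\ast) = \tfrac12\bigl(f_p(x_p^\ast) - f_p(m)\bigr) + \tfrac12\bigl(f_p(\hat x_p^\ast) - f_p(m)\bigr) \leqslant \tfrac12\cdot 0 + \tfrac12\cdot 2\Delta_p = \Delta_p \leqslant 2\Delta_p .
\]
Since $x_p^\ast$ and $\hat x_p^\ast$ form a feasible pair at distance $\gamma$, the definition \eqref{eq:h1def} gives $h_1(\gamma) \leqslant J(x_p^\ast, \hat x_p^\ast) \leqslant 2\Delta_p$.

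It remains to invert this relation. Using that $h_1$ is non-decreasing together with its generalized inverse $h_1^{-1}(c) = \sup\{\gamma : h_1(\gamma) \leqslant c\}$, the estimate $h_1(\gamma) \leqslant 2\Delta_p$ immediately gives $\gamma \leqslant h_1^{-1}(2\Delta_p)$, which is exactly \eqref{eq:bound}. Note also that the degenerate branch of \eqref{eq:h1def} causes no trouble: since $\Delta_p < \infty$ the bound $h_1(\gamma) \leqslant 2\Delta_p$ is finite, and because $h_1$ is finite only for $\gamma < \diam(\Ccal_{p,.})$, this forces $\gamma < \diam(\Ccal_{p,.})$ automatically.

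The main obstacle I expect is this final inversion, not the elementary chain of inequalities: one must justify that $h_1$ is non-decreasing so that $h_1(\gamma) \leqslant c$ indeed entails $\gamma \leqslant h_1^{-1}(c)$. This monotonicity is a structural property of the convexity modulus established in \cite{Phu2010}; it can also be seen directly, since the restriction $g$ of $f_p$ to the line through $v$ and $w$ is convex and the symmetric Jensen gap $\tfrac12\bigl(g(s)+g(-s)\bigr) - g(0)$, measured from the midpoint, is non-decreasing in $s \geqslant 0$. The single geometric ingredient the argument cannot dispense with is the convexity of $\Ccal_{p,.}$, which guarantees $m \in \Ccal_{p,.}$.
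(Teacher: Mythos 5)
First, a point of comparison that matters here: the paper does \emph{not} prove this statement at all --- it is imported verbatim as Theorem 4.5 of \cite{Phu2010} and used as a black-box preliminary --- so there is no in-paper proof to measure you against, and your proposal has to stand on its own. On its own merits, it is the classical perturbation argument for this kind of result (and very likely the skeleton of Phu's original proof): test the optimality of $x_p^\ast$ and of $\hat{x}_p^\ast$ at their common midpoint $m$, add the two resulting inequalities to bound the Jensen gap $J(x_p^\ast,\hat{x}_p^\ast)$ from \eqref{eq:jvw}, and invert the modulus. The two optimality inequalities, the bound $\delta_p(m)-\delta_p(\hat{x}_p^\ast)\leqslant 2\Delta_p$ from \eqref{eq:deltap}, and the decomposition of $J$ are all correct, and you are right that the admissibility of $m$ hinges on convexity of $\Ccal_{p,.}$ (which the paper asserts when introducing $f_p$). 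Two smaller remarks: with the paper's normalization of $J$ (which carries the factor $\tfrac12$) your chain actually proves the sharper bound $\Vert\hat{x}_p^\ast-x_p^\ast\Vert\leqslant h_1^{-1}(\Delta_p)$, and your weakening $\Delta_p\leqslant 2\Delta_p$ is what recovers the stated constant; and the inversion step you single out as the main obstacle is in fact immediate, since with your own definition $h_1^{-1}(c)=\sup\{\gamma : h_1(\gamma)\leqslant c\}$ the implication $h_1(\gamma)\leqslant c \Rightarrow \gamma\leqslant h_1^{-1}(c)$ holds by definition of the supremum, with no monotonicity needed.

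The step that does not hold up as written is your dismissal of the degenerate branch, and that is where the real delicacy lies. The inequality $h_1(\gamma)\leqslant J(x_p^\ast,\hat{x}_p^\ast)$ invokes the infimum characterization in \eqref{eq:h1def}, which is available only when $\gamma<\diam(\Ccal_{p,.})$; arguing that ``$h_1(\gamma)\leqslant 2\Delta_p$ is finite, hence $\gamma<\diam(\Ccal_{p,.})$'' uses that very inequality to justify its own precondition, which is circular. The case $\gamma=\diam(\Ccal_{p,.})$ --- both infimizers at antipodal points of the region --- cannot be excluded a priori. Fortunately the repair uses exactly the tool you already introduced, the monotonicity of the symmetric Jensen gap along a segment: for any $\gamma'<\gamma$, shrink the pair $(x_p^\ast,\hat{x}_p^\ast)$ toward its midpoint $m$ by the factor $\gamma'/\gamma$; the shrunk pair lies in $\Ccal_{p,.}$ by convexity, is at distance $\gamma'$, has the same midpoint $m$, and has a no-larger value of $J$. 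Hence $h_1(\gamma')\leqslant \Delta_p\leqslant 2\Delta_p$ for every $\gamma'<\min\left(\gamma,\diam(\Ccal_{p,.})\right)$, so the sublevel set $\{\gamma'' : h_1(\gamma'')\leqslant 2\Delta_p\}$ has supremum at least $\gamma$ whether $\gamma<\diam(\Ccal_{p,.})$ or $\gamma=\diam(\Ccal_{p,.})$, and \eqref{eq:bound} follows in both cases. With that one patch your proof is complete.
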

	
	For a more compact notation, we call the right-hand side of (\ref{eq:bound}), the confidence radius, defined as follows:
	\begin{defn}[Confidence radius] For a given function \mbox{$\hat{f}_p : \Ccal_{p,.} \subseteq \mathbb{R}^n \to \mathbb{R}$} approximating the function \mbox{$f_p : \Ccal_{p,.} \subseteq \mathbb{R}^n \to \mathbb{R}$} with the maximal approximation error $\Delta_p$ from (\ref{eq:deltap}), the confidence radius is the upper-bound on the distance between $\hat{x}^\ast_p$, the global minimizer of $\hat{f}_p$, and $x^\ast_p$, the global minimizer of $f_p$, and is obtained by
		\[\chi = h_1^{-1} (2 \Delta_p),\]
		where $h_1$ is the convexity modulus in Definition~\ref{def:h1}.\label{def:chi}
	\end{defn}
	
	\begin{prop}[Proposition 2.2, 2.5 in \cite{Phu2010}] Given a convex function $f_p$ on a compact domain $\Ccal_{p,.}$ with convexity modulus $h_1$ defined as (\ref{eq:h1def}), the following hold:
		\begin{enumerate}
			\item $h_1$ is left-continuous on \mbox{$\left(0,\diam(\Ccal_{p,.})\right)$}, and
			\item for $\gamma_1,\gamma_2 \in [0,\diam(\Ccal_{p,.})$, if \mbox{${\gamma_1 < \gamma_2}$}, then $h_1 (\gamma_1) / \gamma_1 \leqslant h_1 (\gamma_1) / \gamma_2$.
		\end{enumerate} \label{prop:h1prop}
	\end{prop}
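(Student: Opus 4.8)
The plan is to prove the two assertions from a single structural property of the convexity defect $J$ in (\ref{eq:jvw}): that it is \emph{super-additive along line segments}. (I read the second assertion as the statement that $\gamma \mapsto h_1(\gamma)/\gamma$ is non-decreasing, i.e. $h_1(\gamma_1)/\gamma_1 \leqslant h_1(\gamma_2)/\gamma_2$, which is the monotonicity actually used to invert $h_1$ in Definition~\ref{def:chi}.) Throughout I will use that $f_p$, being a finite maximum of affine functions, is continuous on the compact convex set $\Ccal_{p,.}$, so that $J$ is continuous on $\Ccal_{p,.} \times \Ccal_{p,.}$, and that $J \geqslant 0$ by convexity of $f_p$.

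The key lemma I would prove first is: for any three collinear points $v, u, w \in \Ccal_{p,.}$ with $u$ on the segment $[v,w]$, one has $J(v,u) + J(u,w) \leqslant J(v,w)$. Writing $m, m_1, m_2$ for the midpoints of $[v,w], [v,u], [u,w]$, a direct expansion of (\ref{eq:jvw}) collapses the claim to $f_p(u) + f_p(m) \leqslant f_p(m_1) + f_p(m_2)$. Since $m_1 + m_2 = u + m$ and $\Vert m_1 - m_2 \Vert = \tfrac12\Vert v - w \Vert \geqslant \Vert u - m \Vert$, the pair $\{m_1,m_2\}$ is a wider, equally-centered spread of $\{u,m\}$ along the same line; the inequality then follows from the elementary fact that $\rho \mapsto f_p(c - \rho e) + f_p(c + \rho e)$ is non-decreasing in $\rho \geqslant 0$ for any direction $e$, by convexity of $f_p$. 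Taking $u$ to split $[v,w]$ into lengths $\gamma_1$ and $\gamma_2 - \gamma_1$, bounding $J(v,u)$ and $J(u,w)$ below by the corresponding infima, and then taking the infimum over all pairs at distance $\gamma_2$ yields the super-additivity of the modulus, $h_1(\gamma_1) + h_1(\gamma_2 - \gamma_1) \leqslant h_1(\gamma_2)$; together with $h_1 \geqslant 0$ this already gives that $h_1$ is non-decreasing.

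For the left-continuity in part~1 I would argue by compactness. Fix $\gamma \in (0, \diam(\Ccal_{p,.}))$ and a sequence $\gamma_n \uparrow \gamma$. For each $n$ the constraint set $\{(v,w) \in \Ccal_{p,.}^2 : \Vert v - w \Vert = \gamma_n\}$ is compact and $J$ is continuous, so the infimum defining $h_1(\gamma_n)$ is attained at some $(v_n, w_n)$. Passing to a subsequence along which $h_1(\gamma_n) \to \liminf_n h_1(\gamma_n)$ and $(v_n, w_n) \to (\bar v, \bar w)$ (possible since $\Ccal_{p,.}^2$ is compact), closedness of $\Ccal_{p,.}$ gives $\Vert \bar v - \bar w \Vert = \gamma$, and continuity of $J$ gives $J(\bar v, \bar w) = \liminf_n h_1(\gamma_n)$. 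Since $(\bar v, \bar w)$ is feasible for $h_1(\gamma)$, we obtain $\liminf_n h_1(\gamma_n) \geqslant h_1(\gamma)$; combined with $h_1(\gamma_n) \leqslant h_1(\gamma)$ from monotonicity, this forces $h_1(\gamma_n) \to h_1(\gamma)$, i.e. left-continuity.

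The main obstacle is part~2, the monotonicity of $h_1(\gamma)/\gamma$, because super-additivity by itself is \emph{not} enough: a super-additive function with $h_1(0)=0$ need not be star-shaped. The extra ingredient I would exploit is the one-dimensional representation of the defect along a segment: parametrizing by arclength $t \in [0,\gamma]$, one has $J = \tfrac12\int_0^\gamma \min(t,\gamma-t)\,\mathrm{d}\mu(t)$, where $\mu \geqslant 0$ is the second-derivative measure of $f_p$ restricted to the segment (a ``tent'' weight whose base is the segment length). Given a minimizing pair for $h_1(\gamma_2)$, the task reduces to producing a sub-pair at distance $\gamma_1$ whose defect is at most $(\gamma_1/\gamma_2)\,h_1(\gamma_2)$, which I would obtain by an optimal-positioning/averaging argument over the sliding sub-segments, following the scaling analysis of the convexity modulus in \cite{Phu2010}. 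Controlling the position of this sub-pair uniformly in $\mu$ — equivalently, showing that the tent-weighted averages scale sub-linearly in the base length — is the delicate step where I expect the real difficulty to lie; the left-continuity established above is then what lets one verify the inequality on a dense set of ratios and pass to the limit.
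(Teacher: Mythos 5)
First, a framing remark: the paper does not prove this proposition at all --- it is quoted verbatim (typo included) from \cite{Phu2010}, so there is no in-paper proof to compare your attempt against; I am judging it on its own terms. Your reading of part~2 is the right one: the printed $h_1(\gamma_1)/\gamma_1 \leqslant h_1(\gamma_1)/\gamma_2$ is a typo for $h_1(\gamma_1)/\gamma_1 \leqslant h_1(\gamma_2)/\gamma_2$, which is the star-shapedness actually needed to invert $h_1$ in Definition~\ref{def:chi}. Your part~1 is correct and complete: the collinear super-additivity lemma $J(v,u)+J(u,w)\leqslant J(v,w)$ (reduced to the fact that $\rho \mapsto f_p(c-\rho e)+f_p(c+\rho e)$ is non-decreasing) gives monotonicity of $h_1$, and the compactness/continuity argument gives left-continuity, using only that $\Ccal_{p,.}$ is compact and convex and $f_p$ is continuous, both available here.

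The genuine gap is part~2, and you have located it precisely but not closed it. Super-additivity only yields $h_1(n\gamma)\geqslant n\,h_1(\gamma)$, i.e.\ a comparison of $\gamma$ with its \emph{integer} multiples; to relate $\gamma_1=(m/n)\gamma_2$ to $\gamma_2$ you need an \emph{upper} bound on $h_1(\gamma_1)$ in terms of $h_1(\gamma_2)$, which super-additivity cannot supply, so the ``dense set of ratios plus left-continuity'' plan never gets started. The sliding-sub-segment averaging you propose instead is exactly the hard content, and it is left unproven; it is also genuinely delicate. For instance, the anchored variant is false: if $f_p$ has a single kink at distance $\epsilon$ from the endpoint $v$ of the segment, then $J(v,u)=\epsilon/2=J(v,w)$ for every $u \in [v,w]$ beyond the kink, so no pair anchored at $v$ obeys the $\gamma_1/\gamma_2$ scaling --- the sub-pair must be positioned adaptively, and controlling this uniformly over the second-derivative measure $\mu$ is precisely what you have not done.

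The missing idea is far simpler: shrink the pair toward its own \emph{midpoint} instead of sliding it. Given $v,w \in \Ccal_{p,.}$ with $\Vert v-w\Vert = \gamma_2$, set $\lambda = \gamma_1/\gamma_2$, $m=(v+w)/2$, and
\begin{align*}
v' = \lambda v + (1-\lambda) m, \qquad w' = \lambda w + (1-\lambda) m,
\end{align*}
so that $v',w' \in \Ccal_{p,.}$, $\Vert v'-w'\Vert = \gamma_1$, and the midpoint of $(v',w')$ is still $m$. Convexity gives $f_p(v')\leqslant \lambda f_p(v)+(1-\lambda)f_p(m)$ and $f_p(w')\leqslant \lambda f_p(w)+(1-\lambda)f_p(m)$; averaging the two and subtracting $f_p(m)$ yields $J(v',w') \leqslant \lambda\, J(v,w)$. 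Taking the infimum over all pairs at distance $\gamma_2$ gives $h_1(\gamma_1) \leqslant (\gamma_1/\gamma_2)\, h_1(\gamma_2)$, which is part~2 (and re-proves monotonicity as a by-product). This three-line argument replaces the entire integral-representation machinery in your sketch.
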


\section{Continuous PWA Approximation of NLPs}\label{sec:approx}

Consider an NLP with polytopic constraints
\begin{align} \label{eq:nlp}
	\min_{x \in \Xcal} & \quad F (x),
\end{align}
where \mbox{$F : \Dcal \subset \mathbb{R}^n \to \mathbb{R}$} is the nonlinear objective function and $\Xcal \subseteq \Dcal$ is the polytopic feasible region. From now on, we assume that the domain $\Dcal$ is compact. We approximate $F$ by a continuous PWA function $f$ of the MMPS form (\ref{eq:fmmps}) via solving the approximation problem
\begin{align} \label{eq:approxprob}
	\min_{\Acal, \ \Bcal} & \quad \int\limits_{\Dcal} \left\vert F(x) - \min_{p \in \Ical_P} \max_{q \in \Ical_{Q_p}} \left(a_{p,q}^T x + b_{p,q}\right) \right\vert \; d x,
\end{align}
to minimize the absolute approximation error, where the ordered sets $\Acal$ and $\Bcal$ respectively collect $a_{p,q}$ and $b_{p,q}$.
\begin{exmp}
	Figure~\ref{fig:fapprox} shows a 1-dimensional example of approximating a nonlinear objective function $F$ by a continuous PWA function $f$ using the MMPS form (\ref{eq:fmmps}) as
	\[F (x) \approx f (x) = \min \left( \underbrace{\max \left( f_{1,1} , f_{1,2} \right)}_{f_{1,.}},  \underbrace{\max \left( f_{2,1} , f_{2,2} \right)}_{f_{2,.}} , f_{3,.} \right),\]
	with $P = 3$, $Q_1 =Q_2 = 2$, and \mbox{$Q_3$ = 1}. The convex segments of $f$ are shown by $f_{p,.}$ which give the maximum value among $Q_p$ affine functions $f_{p,q}$, $q \in \Ical_{Q_p}$. The subregions $\Ccal_{p,q}$ are shown in the same color as their corresponding active affine functions, $f_{p,q}$. In this 1-dimensional example, $\diam(\Ccal_{p,.})$ is the distance between the upper and lower bounds of $\Ccal_{p,.}$ on the $x$-axis.\vspace{-10pt}
	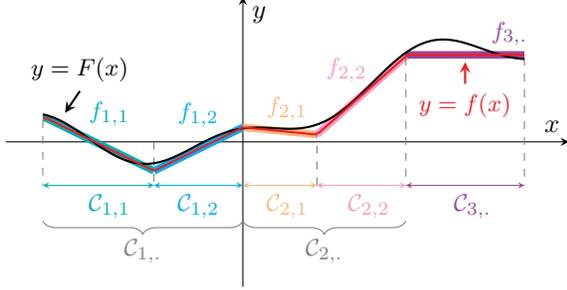
\begin{figure}[htbp]\centering
		\begin{tikzpicture}
			\begin{axis}[height=0.3\textwidth,width=0.5\textwidth,ytick=0,xtick=\empty,
				xlabel=$x$,ylabel=$y$,xmin=-1.6,xmax=2.2,ymin=-1.0,ymax=1.0,
				legend pos=south east,axis x line=middle,axis y line=middle,]		
				\addplot[color=TealBlue,opacity=0.7,line width=1mm,samples=100,domain=-1.35:-0.6]
				{max(1.0*x-0.1,0.4,-1*x-0.2)-max(1.0*x-0.7,0.1*x+0.3,-0.5*x+0.3,-1.1*x-0.5)};
				\addplot[color=Cyan,opacity=0.7,line width=1mm,samples=100,domain=-0.6:0]
				{max(1.0*x-0.1,0.4,-1*x-0.2)-max(1.0*x-0.7,0.1*x+0.3,-0.5*x+0.3,-1.1*x-0.5)};
				\addplot[color=Apricot,opacity=0.7,line width=1mm,samples=100,domain=0:0.5]
				{max(1.0*x-0.1,0.4,-1*x-0.2)-max(1.0*x-0.7,0.1*x+0.3,-0.5*x+0.3,-1.1*x-0.5)};
				\addplot[color=Lavender,opacity=0.7,line width=1mm,samples=100,domain=0.5:1.1]
				{max(1.0*x-0.1,0.4,-1*x-0.2)-max(1.0*x-0.7,0.1*x+0.3,-0.5*x+0.3,-1.1*x-0.5)};
				\addplot[color=Purple,opacity=0.7,line width=1mm,samples=100,domain=1.1:1.9]
				{max(1.0*x-0.1,0.4,-1*x-0.2)-max(1.0*x-0.7,0.1*x+0.3,-0.5*x+0.3,-1.1*x-0.5)};
				\addplot[color=gray,dashed] coordinates {(-1.35,0.15)(-1.35,-0.3)}; 
				\addplot[color=gray,dashed] coordinates {(-0.6,0)(-0.6,-0.3)};  
				\addplot[TealBlue,ultra thin, stealth-stealth] coordinates {(-1.35,-0.3)(-0.6,-0.3)};
				\node[color=TealBlue,below] at (-0.9,0.-0.3){\small{$\Ccal_{1,1}$}};
				\node[color=TealBlue,above] at (-0.9,0.05){\small{$f_{1,1}$}};
				\addplot[color=gray,dashed] coordinates {(0,0.1)(0,0)};
				\addplot[Cyan,ultra thin, stealth-stealth] coordinates {(-0.6,-0.3)(0,-0.3)};
				\node[color=Cyan,below] at (-0.3,-0.3){\small{$\Ccal_{1,2}$}};
				\node[color=Cyan,above] at (-0.3,0.04){\small{$f_{1,2}$}};
				\draw [gray,decorate,decoration={brace,amplitude=5pt,mirror,raise=3ex}](-1.35,-0.3) -- (0,-0.3) node[midway,yshift=-2.5em]{\small{$\Ccal_{1,.}$}};
				\addplot[color=gray,dashed] coordinates {(0.5,-0.3)(0.5,0.05)};
				\addplot[Apricot,ultra thin, stealth-stealth] coordinates {(0,-0.3)(0.5,-0.3)};
				\node[color=Apricot,below] at (0.3,-0.3){\small{$\Ccal_{2,1}$}};
				\node[color=Apricot,above] at (0.3,0.08){\small{$f_{2,1}$}};
				\addplot[color=gray,dashed] coordinates {(1.1,0.58)(1.1,-0.3)};
				\addplot[Lavender,ultra thin, stealth-stealth] coordinates {(0.5,-0.3)(1.1,-0.3)};
				\node[color=Lavender,below] at (0.85,-0.3){\small{$\Ccal_{2,2}$}};
				\node[color=Lavender,above] at (0.7,0.35){\small{$f_{2,2}$}};
				\draw [gray,decorate,decoration={brace,amplitude=5pt,mirror,raise=3ex}](0,-0.3) -- (1.1,-0.3) node[midway,yshift=-2.5em]{\small{$\Ccal_{2,.}$}};
				\addplot[color=gray,dashed] coordinates {(1.9,0.58)(1.9,-0.3)};
				\addplot[Purple,ultra thin, stealth-stealth] coordinates {(1.1,-0.3)(1.9,-0.3)};
				\node[color=Purple,below] at (1.55,-0.3){\small{$\Ccal_{3,.}$}};
				\node[color=Purple,above] at (1.8,0.6){\small{$f_{3,.}$}};
				\addplot[color=black, thick,samples=100,domain=-1.35:1.9]
				{0.4*sin(deg(0.6*x+0.8)) + 0.3*sin(deg(1.6*x-1.3)) + 0.1*sin(deg(4.9*x+1.9))};
				\addplot[black, thick, stealth-] coordinates {(-1.2,0.20) (-1.1,0.35) }
				node[above,font=\small] {$y = F (x)$};
				\addplot[color=Red, thick,samples=100,domain=-1.35:1.9]
				{max(1.0*x-0.1,0.4,-1*x-0.2)-max(1.0*x-0.7,0.1*x+0.3,-0.5*x+0.3,-1.1*x-0.5)};
				\addplot[color=Red,thick, stealth-] coordinates {(1.5,0.55) (1.5,0.4) } 
				node[below,font=\small]
				{$y = f (x)$};
			\end{axis}
		\end{tikzpicture}\caption{A conceptual example of approximating a nonlinear function $F$ with a continuous PWA approximation $f$ using the MMPS form in (\ref{eq:fmmps}).}\label{fig:fapprox}
	\end{figure}
\end{exmp}

\section{Theoretical Analysis}\label{sec:prob}
The main result of this section is the sensitivity bound in Theorem \ref{thm:main}. Before, we present some building blocks about continuity and boundedness of the convexity modulus (cf. Lemma \ref{lem:dh1pwc}, \ref{lem:dh1cont} and \ref{lem:h1form}).

\begin{lem}
	For a convex PWA function $f_p$ expressed by (\ref{eq:fp}), $\partial h_1 / \partial \gamma$ is a piecewise constant function.\label{lem:dh1pwc}
\end{lem}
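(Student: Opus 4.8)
The plan is to turn the definition (\ref{eq:h1def})--(\ref{eq:jvw}) of $h_1$ into a parametric optimization problem whose value function is $h_1(\gamma)$, and to exploit the fact that $f_p$ has only finitely many affine pieces to cut the parameter range $[0,\diam(\Ccal_{p,.}))$ into finitely many subintervals on each of which $h_1$ is affine. Throughout I use that $\Dcal$, and hence the convex polytope $\Ccal_{p,.}$, is compact, so that for every $\gamma$ the infimum in (\ref{eq:h1def}) is attained: the feasible set $\{(v,w)\in\Ccal_{p,.}^2:\|v-w\|=\gamma\}$ is compact and $J$ is continuous.

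First I would record the piecewise-affine structure of the integrand $J$. Using the region index function $\sigma_p$ in (\ref{eq:sigma}), classify a feasible pair by the triple $(i,j,k)=(\sigma_p(v),\sigma_p(w),\sigma_p(\tfrac{v+w}{2}))$ of active affine pieces of $f_p$ at $v$, at $w$, and at the midpoint. There are at most $|\Ical_{Q_p}|^3$ such triples, and on the polyhedral cell $P_{ijk}$ of pairs realizing a given triple the definition (\ref{eq:fp}) of $f_p$ lets me substitute the corresponding affine pieces into (\ref{eq:jvw}), yielding
\begin{equation*}
 J(v,w)=\tfrac{1}{2}(a_{p,i}-a_{p,k})^{\!\top}v+\tfrac{1}{2}(a_{p,j}-a_{p,k})^{\!\top}w+\tfrac{1}{2}(b_{p,i}+b_{p,j})-b_{p,k},
\end{equation*}
which is affine in $(v,w)$. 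Hence $h_1(\gamma)=\min_{(i,j,k)}g_{ijk}(\gamma)$, where $g_{ijk}(\gamma)$ is the infimum of this affine objective over $P_{ijk}$ intersected with the sphere $\|v-w\|=\gamma$, a minimum over a fixed finite index set.

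I would then analyze each branch $g_{ijk}$ in the midpoint--half-difference coordinates $m=\tfrac{v+w}{2}$ and $d=\tfrac{v-w}{2}$. This change of variables is linear, so $P_{ijk}$ stays polyhedral, the objective stays affine in $(m,d)$, and the coupling constraint becomes $\|d\|=\gamma/2$. For a fixed collection of active facets of $P_{ijk}$ the problem reduces to minimizing a linear functional over the intersection of a fixed affine subspace with the sphere of radius $\gamma/2$; I would argue that, together with the half-difference direction realizing the infimum, this active collection is locally constant in $\gamma$ and changes only at finitely many thresholds, determined by the finitely many facets of the cells. Between consecutive thresholds the extremal configuration is described by a single branch, and I would verify that on such a branch $g_{ijk}$ is affine in $\gamma$. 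Passing to the minimum over the finite index set then shows $h_1$ is continuous and piecewise affine on $[0,\diam(\Ccal_{p,.}))$, where the left-continuity and the monotonicity of $h_1(\gamma)/\gamma$ in Proposition~\ref{prop:h1prop} are used to glue the branches and to rule out accumulation of thresholds, so that $\partial h_1/\partial\gamma$ is piecewise constant.

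The step I expect to be the main obstacle is precisely the last one: controlling how the sphere radius $\gamma/2$ enters the value of each branch. The objective is linear and the only nonlinear constraint is the sphere, so once the active facets \emph{and} the supporting half-difference direction are fixed the value should depend affinely on $\gamma$; the delicate point is that when domain facets pin some components of $d$ while others remain free on the sphere, the free part varies with $\gamma$ through the sphere equation, and one must show that at the \emph{extremal} configuration the half-difference aligns with a fixed direction so that this dependence linearizes. Making this alignment argument rigorous -- rather than merely counting finitely many region-traversal patterns -- is the crux, and it is where I would spend most of the effort, leaning on the convexity of $f_p$ and the monotonicity in Proposition~\ref{prop:h1prop}.
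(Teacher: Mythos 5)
Your setup---classifying feasible pairs by the triple $(i,j,k)$ of active pieces and noting that $J$ is then affine on each resulting cell---is exactly how the paper's proof begins. The divergence is in what follows: the paper applies first-order Lagrange conditions to the infimum in (\ref{eq:h1def}) treating $\Vert v-w\Vert=\gamma$ as the \emph{only} constraint, and reads off $\partial h_1/\partial\gamma=-2\mu$ with the multiplier determined, via (\ref{eq:exmu}), by the finitely many slope differences $(a_{p,i}-a_{p,j})/2$; piecewise constancy then follows from finiteness of the index set. You instead propose a branch-wise value-function analysis, and the entire burden falls on your claim that each branch $g_{ijk}(\gamma)$ becomes affine in $\gamma$ once the active facets are fixed---the ``alignment'' step you yourself flag as unresolved. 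That step is a genuine gap, and it is not merely hard: it fails in general.

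Concretely, take $n=2$ and $f_p(x)=\max(0,x_1)$ on $\Ccal_{p,.}=[-1,1]^2$, so $Q_p=2$ with $a_{p,1}=(0,0)^T$, $a_{p,2}=(1,0)^T$, $b_{p,1}=b_{p,2}=0$. A short computation gives $J(v,w)=\tfrac12\min(v_1,-w_1)$ for straddling pairs $v_1\geqslant 0\geqslant w_1$ and $J=0$ otherwise, whence
\[
h_1(\gamma)=\tfrac12\left(\sqrt{\gamma^2-4}\,-1\right)\quad\text{for }\sqrt{5}<\gamma<2\sqrt{2}=\diam(\Ccal_{p,.}),
\]
and $h_1(\gamma)=0$ for $\gamma\leqslant\sqrt{5}$. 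Here the extremal pair is pinned by the facets of the square ($v_1=1$, $\vert v_2-w_2\vert=2$), and the sphere constraint feeds into the one remaining free coordinate through $\sqrt{\gamma^2-4}$---exactly the mechanism you worried about---so the branch value is not affine and $\partial h_1/\partial\gamma$ is not piecewise constant. Hence your route cannot be completed as proposed; any completion needs an extra hypothesis, e.g.\ $n=1$ (where the sphere is a point pair and everything is genuinely PWA in $\gamma$), or that the extremal pair is unpinned by the domain facets. It is worth noting that the paper's own argument tacitly assumes the latter: its Lagrange system (\ref{eq:jgrads}) omits the constraints $v,w\in\Ccal_{p,.}$ altogether, which is legitimate only when those constraints are inactive at the optimum. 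Your more careful formulation has the virtue of exposing precisely where that hidden assumption enters, but as a proof of the stated lemma it stalls at a step that a two-dimensional example already defeats.
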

\begin{proof}
	For compactness, let us denote
	\[	i = \sigma_p (v), \qquad 	j = \sigma_p (w), \qquad k = \sigma_p \left(\dfrac{v+w}{2}\right),\]
	with $i,j,k \in \Ical_{Q_p}$. For $f_p$ expressed by (\ref{eq:fp}), the function $J$ in (\ref{eq:jvw}) can be written as
	\begin{multline}
		J(v,w) = \\ \dfrac{a_{p,i}^T - a_{p,k}^T}{2} v + \dfrac{a_{p,j}^T - a_{p,k}^T}{2} w  +  \dfrac{b_{p,i}+b_{p,j} - 2 b_{p,k}}{2}.\label{eq:jvwnew}
	\end{multline}
	The necessary Lagrange conditions for optimality at $(v^\ast, w^\ast)$ in (\ref{eq:h1def}) state that there must exist $\mu \in \mathbb{R}$ that satisfies the following simultaneously:
	
	{\small
		\begin{subequations}\label{eq:jgrads}
			\begin{align}
				& \nabla J (v ^\ast, w^\ast) + \mu \nabla \left(\Vert v - w \Vert - \gamma \right)\Big|_{\substack{v = v^\ast\\w = w^\ast}}  = 0, \\
				&\Vert v^\ast - w^\ast \Vert = \gamma.
			\end{align}
	\end{subequations}}%
	By calculating the gradient of (\ref{eq:jvwnew}), we have that (\ref{eq:jgrads}) becomes
	{\small
		\begin{subequations}
			\begin{align}
				&\dfrac{a_{p,i}^T - a_{p,k}^T}{2} + \mu \dfrac{v^\ast - w^\ast}{\Vert v^\ast - w^\ast \Vert} = 0, \\
				&\dfrac{a_{p,j}^T - a_{p,k}^T}{2} + \mu \dfrac{w^\ast - v^\ast}{\Vert v^\ast - w^\ast \Vert} = 0,\\
				&\Vert v^\ast - w^\ast \Vert = \gamma,
			\end{align}
	\end{subequations}}%
	which implies the existence of $\mu \in \mathbb{R}$ satisfying
	\begin{align}
		\dfrac{a_{p,i}^T - a_{p,j}^T}{2} + 2 \mu \dfrac{v^\ast - w^\ast}{\gamma} = 0.\label{eq:exmu}
	\end{align}
	Note that $\partial h_1 \left/ \partial \gamma \right.$ is equal to $\partial J \left/ \partial \gamma \right.$, except where the indices $i$, $j$, and $k$ change. At such points, $h_1$ is not differentiable with respect to $\gamma$, which does not conflict with $\partial h_1 \left/ \partial \gamma \right.$ being a piecewise constant function. To find the slope of $h_1$ where it exists, the chain rule can be applied as
	\begin{align*}
		\dfrac{\partial h_1}{\partial \gamma} = \dfrac{\partial J}{\partial v^\ast} \left/ \dfrac{\partial \gamma}{\partial v^\ast}\right. +\dfrac{\partial J}{\partial w^\ast} \left/ \dfrac{\partial \gamma}{\partial w^\ast}\right. = \dfrac{a_{p,i}^T - a_{p,j}^T}{2} \dfrac{\Vert v^\ast - w^\ast \Vert}{v^\ast - w^\ast},
	\end{align*}
	which, considering (\ref{eq:exmu}), leads to
	\begin{align}
		\dfrac{\partial h_1}{\partial \gamma} = - 2 \mu,\label{eq:dh1}
	\end{align}
	which implies that $\partial h_1 / \partial \gamma$ is a function of the $a_{p,i}^T - a_{p,j}^T$ values.
\end{proof}
\begin{lem}\label{lem:dh1cont}
	For a convex PWA function $f_p$ expressed by (\ref{eq:fp}), the convexity modulus $h_1$ in (\ref{eq:h1def}) is continuous on $[0,\diam(\Ccal_{p,.}))$.
\end{lem}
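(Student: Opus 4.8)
The plan is to prove continuity by sandwiching $h_1$ between a lower- and an upper-semicontinuous estimate at every $\gamma_0 \in [0,\diam(\Ccal_{p,.}))$, reusing the two structural facts already available: $h_1$ is piecewise affine by Lemma~\ref{lem:dh1pwc}, and $h_1$ is left-continuous on $(0,\diam(\Ccal_{p,.}))$ by Proposition~\ref{prop:h1prop}. First I would reduce the claim to the finitely many breakpoints produced by Lemma~\ref{lem:dh1pwc}: since $\partial h_1/\partial\gamma$ is piecewise constant, $h_1$ is affine, hence continuous, on the interior of each piece, and the one-sided limits exist at each breakpoint. Thus it remains only to rule out a jump at each interior breakpoint $\gamma_0\in(0,\diam(\Ccal_{p,.}))$ and to verify continuity at the endpoint $\gamma=0$.

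The endpoint $\gamma=0$ is straightforward: $h_1(0)=J(v,v)=0$, while for any $v,w$ with $\|v-w\|=\gamma$ the Lipschitz property of the max-affine map $f_p$ (with constant $L_p=\max_{q}\|a_{p,q}\|$) gives $0\le J(v,w)\le \tfrac{1}{2}L_p\gamma$, the lower bound by convexity of $f_p$ and the upper bound by bounding each of $f_p(v)-f_p(\tfrac{v+w}{2})$ and $f_p(w)-f_p(\tfrac{v+w}{2})$ by $\tfrac{1}{2}L_p\gamma$. Taking the infimum over $(v,w)$ yields $0\le h_1(\gamma)\le \tfrac{1}{2}L_p\gamma\to0$, so $h_1$ is continuous at $0$.

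For the interior breakpoints I would split the argument into the two semicontinuity directions. Lower semicontinuity, i.e.\ $\liminf_{\gamma\to\gamma_0}h_1(\gamma)\ge h_1(\gamma_0)$, follows from viewing $h_1(\gamma)=\inf_{(v,w)\in S_\gamma}J(v,w)$ with $S_\gamma \coloneqq \{(v,w)\in\Ccal_{p,.}\times\Ccal_{p,.} : \|v-w\|=\gamma\}$: the set-valued map $\gamma\mapsto S_\gamma$ has closed graph and compact range, hence is upper hemicontinuous, and since $J$ is continuous and the infimum is attained on the compact $S_\gamma$, the value function $h_1$ is lower semicontinuous. The left half of upper semicontinuity is exactly the left-continuity supplied by Proposition~\ref{prop:h1prop}. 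Combining these two facts already matches the left limit to $h_1(\gamma_0)$ and excludes a downward jump from the right; what remains is to exclude an upward jump from the right.

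The main obstacle is therefore the right-hand upper semicontinuity, $\limsup_{\gamma\downarrow\gamma_0}h_1(\gamma)\le h_1(\gamma_0)$. To establish it I would take a minimizer $(v_0,w_0)\in S_{\gamma_0}$ and construct, for $\gamma$ slightly above $\gamma_0$, feasible pairs $(v_\gamma,w_\gamma)\in S_\gamma$ with $(v_\gamma,w_\gamma)\to(v_0,w_0)$, so that continuity of $J$ gives $\limsup_{\gamma\downarrow\gamma_0}h_1(\gamma)\le J(v_0,w_0)=h_1(\gamma_0)$. The crux is the geometric claim that, because $\gamma_0<\diam(\Ccal_{p,.})$, the pair $(v_0,w_0)$ is \emph{not} a local maximizer of the pairwise distance $(v,w)\mapsto\|v-w\|$ over the convex body $\Ccal_{p,.}\times\Ccal_{p,.}$; hence there is an arbitrarily small admissible perturbation that strictly increases the distance, and convexity of $\Ccal_{p,.}$ together with the intermediate value theorem then lets me realize every value in a right-neighborhood of $\gamma_0$ while staying close to $(v_0,w_0)$. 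Verifying this no-local-maximum property — ruling out the degenerate configurations in which the endpoints sit at a boundary/support extremum in the direction $v_0-w_0$, which forces the escape to proceed along the boundary rather than along $v_0-w_0$ — is the delicate part, and it is exactly where the hypothesis $\gamma_0<\diam(\Ccal_{p,.})$ is essential.
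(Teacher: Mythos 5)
Your scaffolding is fine: the reduction via Lemma~\ref{lem:dh1pwc}, the bound $0\le h_1(\gamma)\le\tfrac12 L_p\gamma$ at $\gamma=0$, lower semicontinuity of $h_1$ by compactness of $S_\gamma$ and continuity of $J$, and the use of Proposition~\ref{prop:h1prop} for the left side are all correct. The proof fails at exactly the step you flag as delicate, and the failure is not a missing verification but a false claim: for a compact convex domain it is \emph{not} true that every pair $(v_0,w_0)$ with $\Vert v_0-w_0\Vert<\diam(\Ccal_{p,.})$ fails to be a local maximizer of the pairwise distance. Counterexample: let $\Ccal_{p,.}=\mathrm{conv}\{(0,1),(0,-1),(10,0.1),(10,-0.1)\}\subset\mathbb{R}^2$ and take $v_0=(0,1)$, $w_0=(0,-1)$. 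Any $v=(x_1,y_1)$, $w=(x_2,y_2)$ in $\Ccal_{p,.}$ satisfy $\vert y_1-y_2\vert\le 2-0.09(x_1+x_2)$ and $\vert x_1-x_2\vert\le x_1+x_2 \eqqcolon s$, hence
\begin{equation*}
\Vert v-w\Vert^2\le s^2+(2-0.09s)^2=4+s\,(1.0081\,s-0.36)<4\quad\text{for }0<s<0.35,
\end{equation*}
while $s=0$ forces $\Vert v-w\Vert\le2$. So every pair near $(v_0,w_0)$ has distance at most $\Vert v_0-w_0\Vert=2$, i.e., $(v_0,w_0)$ is a local maximizer of the distance, although $\diam(\Ccal_{p,.})>10$. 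Both supporting faces orthogonal to $v_0-w_0$ are singletons, so the first-order loss $-0.36s$ dominates the second-order tangential gain; your escape construction cannot start from such a pair, and right upper semicontinuity does not follow.

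Worse, the gap cannot be closed, because on this domain the lemma itself fails. Take $f_p(x,y)=g(x)$ with $g$ convex PWA, $g\equiv0$ on $[0,0.1]$, and slope increasing by $1$ at each kink $x=0.1k$, $k\in\Ical_{99}$. Since $v_0$, $w_0$, and their midpoint all lie on $\{x=0\}$, we get $J(v_0,w_0)=0$ and hence $h_1(2)=0$. Conversely, the estimate above shows that any pair at distance $\gamma>2$ has $s>0.357$ and $\vert x_1-x_2\vert>0.35$; its midpoint then has a kink within $0.1$ on each side, and a two-sided subgradient estimate across these kinks gives $J(v,w)\ge\vert x_1-x_2\vert/2-0.1>0.07$. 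Therefore $h_1(\gamma)\ge0.07$ for all $\gamma\in(2,2.1]$, so $h_1$ jumps at $\gamma=2<\diam(\Ccal_{p,.})$: the modulus is genuinely discontinuous, and no argument can establish the lemma without additional hypotheses on the geometry of $\Ccal_{p,.}$ (none are stated). For the record, the paper's own proof hides the same unestablished step---it posits $w_0^+$ with $\Vert v_0-w_0^+\Vert=\gamma_0^+$ and tacitly lets $w_0^+\to w_0$, which is precisely the property that fails above---so you have correctly isolated the true crux of the matter; but a flagged, unproved, and in fact false claim leaves your proof with a genuine gap.
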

\begin{proof}
	From Proposition~\ref{prop:h1prop}, we know that $h_1$ is left-continuous on $[0,+\infty)$. Seeking a contradiction, let us assume that $h_1$ is not right-continuous in $\gamma_0 \in [0,\diam(\Ccal_{p,.}))$, hence,
	\[\lim_{\gamma \to \gamma_0^+} h_1 (\gamma) \neq h_1 (\gamma_0).\]
	The monotonicity property of $h_1$ in Proposition~\ref{prop:h1prop} requires 
	\begin{align*}
		\lim_{\gamma \to \gamma_0^+} h_1 (\gamma) > h_1 (\gamma_0).
	\end{align*}
	Therefore, without loss of generality, we assume there exists a gap $\epsilon_0 > 0$ and a point $\gamma_0 < \gamma_0^+ < \diam(\Ccal_{p,.})$ such that
	\begin{align}
		h_1 (\gamma_0^+) = h_1 (\gamma_0) + \dfrac{\partial h_1}{\partial \gamma} \Big|_{\gamma = \gamma_0} \left(\gamma_0^+ - \gamma_0\right) + \epsilon_0.\label{eq:h1conteps0}
	\end{align}
	Using (\ref{eq:h1def}), we define the points $v_0$, $w_0$, and $w_0^+$ such that
	\begin{align*}
		h_1 (\gamma_0) = \inf\limits_{\substack{v,w \in \Ccal_{p,.}\\\Vert v - w \Vert = \gamma_0}} J (v,w)  = J (v_0, w_0),
	\end{align*}
	and $\Vert v_0 - w_0^+ \Vert = \gamma_0^+$.	Considering the optimality property in (\ref{eq:h1def}), we have
	\begin{align*}
		h_1 (\gamma_0^+) \leqslant J (v_0, w_0^+),
	\end{align*}
	and knowing that $J$ is a continuous function by definition, we can deduce
	\begin{align*}
		h_1 (\gamma_0^+) \leqslant h_1 (\gamma_0) + \rho (w_0^+ - w_0),
	\end{align*}
	where $\rho : \Ccal_{p,.} \to \mathbb{R}$ is a function with the following property:
	\begin{align}
		\lim_{\nu \to 0^+} \rho (\nu) = 0.\label{eq:h1b4lim}
	\end{align}
	Substituting (\ref{eq:h1conteps0}) into (\ref{eq:h1b4lim}) and taking the limit on both sides when $\gamma_0^+$ approaches $\gamma_0$ leads to 
	\begin{align*}
		h_1 (\gamma_0) + \epsilon_0 \leqslant h_1 (\gamma_0),
	\end{align*}
	which contradicts the fact that $\epsilon_0 > 0$. Therefore, $h_1$ is right-continuous on $[0, \diam (\Ccal_{p,.}))$.
\end{proof}

\begin{prop}\label{prop:h1is0}
	For a convex PWA function $f_p$ expressed by~(\ref{eq:fp}), we have 
	\[h_1 (\gamma) = 0, \qquad \forall \gamma \leqslant \max\limits_{q \in \Ical_{Q_p}} \left\{ \diam(\Ccal_{p,q})\right\}.\]
\end{prop}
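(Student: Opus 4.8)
The plan is to show that $h_1(\gamma)$, being an infimum of the nonnegative quantity $J$ from (\ref{eq:jvw}), vanishes by exhibiting a single admissible pair of points that realizes $J = 0$ at the prescribed distance. First I would record that $f_p$ in (\ref{eq:fp}) is convex, so by midpoint convexity $J(v,w) \geq 0$ for every admissible pair $v, w \in \Ccal_{p,.}$; hence $h_1(\gamma) \geq 0$ and it suffices to produce $v, w \in \Ccal_{p,.}$ with $\Vert v - w \Vert = \gamma$ and $J(v,w) = 0$.

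The crucial observation is that on any single activation region $\Ccal_{p,q}$ the function $f_p$ coincides with the affine map $a_{p,q}^T x + b_{p,q}$, and that $\Ccal_{p,q}$ is convex. Indeed, $\Ccal_{p,q}$ is precisely the set on which the $q$-th affine term attains the maximum, i.e.\ the intersection of the half-spaces $\{ a_{p,q}^T x + b_{p,q} \geq a_{p,q'}^T x + b_{p,q'} \}$ over $q' \in \Ical_{Q_p}$ with the polytopic domain, so it is a convex polyhedron. Consequently, if $v$ and $w$ both lie in the same $\Ccal_{p,q}$, then their midpoint lies there too, and evaluating all three terms of $J$ through the common affine expression gives $J(v,w) = 0$ by affinity.

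Next I would use these regions to realize the required distance. Fix $q^\star$ attaining $\max_{q \in \Ical_{Q_p}} \diam(\Ccal_{p,q})$, and let $v_0, w_0 \in \Ccal_{p,q^\star}$ realize its diameter. By convexity of $\Ccal_{p,q^\star}$ the segment from $v_0$ to $w_0$ remains inside the region, and the map $t \mapsto \Vert t\,(w_0 - v_0) \Vert = t\,\diam(\Ccal_{p,q^\star})$ sweeps continuously over $[0, \diam(\Ccal_{p,q^\star})]$; since $\gamma \leq \diam(\Ccal_{p,q^\star})$ there is a point $w_\gamma$ on this segment with $\Vert v_0 - w_\gamma \Vert = \gamma$. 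Taking $v = v_0$ and $w = w_\gamma$ yields an admissible pair contained in a single activation region, whence $J(v,w) = 0$ and therefore $h_1(\gamma) = 0$, as claimed.

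The step I expect to require the most care is confirming that the activation regions $\Ccal_{p,q}$ are genuinely convex and that an intermediate distance can always be attained inside one of them; this rests on the max-of-affine structure of $f_p$ and on the elementary fact that a convex set attains every distance between $0$ and its diameter. A minor boundary subtlety is that the infimum definition (\ref{eq:h1def}) of $h_1$ applies only for $\gamma < \diam(\Ccal_{p,.})$, so the claim is to be read in that regime, which is exactly where $\max_{q} \diam(\Ccal_{p,q})$ lies whenever more than one affine piece is active.
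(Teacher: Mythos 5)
Your proof is correct and follows essentially the same route as the paper: both arguments exhibit two points at distance $\gamma$ inside the largest affine activation region, where $J$ vanishes because $f_p$ is affine there, forcing the infimum defining $h_1$ to be zero. The paper phrases this as a contradiction and leaves the convexity of the subregion, the attainability of intermediate distances, and the nonnegativity $J \geqslant 0$ implicit, whereas you spell these out directly---a presentational difference only, not a different method.
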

\begin{proof}
	By seeking a contradiction, let us assume that
	\[\exists \gamma_0 \leqslant \max\limits_{q \in \Ical_{Q_p}} \left\{ \diam(\Ccal_{p,q})\right\}, \quad \text{such that} \quad h_1 (\gamma_0) > 0,\]
	with the corresponding optimal points $v^\ast_0$ and $w^\ast_0$ from (\ref{eq:h1def}) such that
	\begin{align*}
		h_1 (\gamma_0) = \inf\limits_{\substack{v,w \in \Ccal_{p,.}\\\Vert v - w \Vert = \gamma_0}} J (v,w)  = J (v^\ast_0, w^\ast_0).
	\end{align*}
	Let us select two points, $v_0$ and $w_0$, on the largest subregion in $\Ccal_{p,.}$ such that
	\[\Vert v_0 - w_0 \Vert = \gamma_0.\]
	which results in $J (v_0, w_0) = 0$. Considering the optimality property in (\ref{eq:h1def}), we have
	\[J (v^\ast_0,w^\ast_0) \leqslant J (v_0,w_0),\]
	which contradicts the initial assumption that $h_1 (\gamma_0) > 0$ and $ h_1 (\gamma_0)= J (v^\ast_0,w^\ast_0)$.
\end{proof}

\begin{lem}\label{lem:h1form}
	For a convex PWA function $f_p$ expressed by (\ref{eq:fp}), the convexity modulus $h_1$ 
	is bounded by $\hat{h}_1 \leqslant h_1$, with
	\begin{align}
		\hat{h}_1 (\gamma) \coloneqq  
		\begin{cases}\begin{array}{ll}
				0 & \text{if }\gamma < \diam (\Ccal_{p,.})\\
				c_1 \gamma + c_0 & \text{if }\gamma \geqslant \diam (\Ccal_{p,.})
			\end{array}			
		\end{cases},\label{eq:h1hat}
	\end{align}
	where
	\begin{subequations}
		\begin{align}
			c_1 = \; \min\limits_{j \in \Ical_{Q_p}} \quad & \left\{ \dfrac{a_{p,i}^T - a_{p,j}^T}{2} \right\}, \\
			\mathrm{s.t.} \quad & i = \arg\max\limits_{q \in \Ical_{Q_p}} \diam(\Ccal_{p,q}), \\
			& \Ccal_{p,i} \cap \Ccal_{p,j} \neq \emptyset, 
		\end{align}\label{eq:c1def}
	\end{subequations}
	and $c_0 = \; c_1 \; \diam (\Ccal_{p,.})$.		
\end{lem}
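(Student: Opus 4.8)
The plan is to verify the claimed inequality $\hat h_1 \le h_1$ separately on the two intervals appearing in the definition (\ref{eq:h1def}) of $h_1$, and then to identify the coefficients $c_1$ and $c_0$ from the slope characterization already established in Lemma~\ref{lem:dh1pwc}. First I would dispose of the regime $\gamma < \diam(\Ccal_{p,.})$, where $\hat h_1(\gamma)=0$. Because $f_p$ in (\ref{eq:fp}) is convex, the quantity $J$ in (\ref{eq:jvw}) satisfies $J(v,w)\ge 0$ for every admissible pair $v,w$, so the infimum defining $h_1$ in (\ref{eq:h1def}) is nonnegative; hence $\hat h_1(\gamma)=0\le h_1(\gamma)$ on this interval. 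Proposition~\ref{prop:h1is0} sharpens this by pinning down exactly where $h_1$ vanishes, but plain nonnegativity already suffices for the bound here.

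Next I would treat the regime $\gamma \ge \diam(\Ccal_{p,.})$. Here the definition (\ref{eq:h1def}) sets $h_1(\gamma)=+\infty$, so the finite affine expression $c_1\gamma+c_0$ automatically obeys $\hat h_1(\gamma)\le h_1(\gamma)$. Together with the previous paragraph this already establishes the inequality $\hat h_1 \le h_1$ asserted in the lemma. Consequently, the substantive content is not the inequality itself but the identification of the particular coefficients $c_1$ and $c_0$, which are chosen so that $\hat h_1$ is the tightest simple surrogate for $h_1$ when computing the confidence radius of Definition~\ref{def:chi}.

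To pin down $c_1$ and $c_0$ I would invoke Lemma~\ref{lem:dh1pwc}, which shows $\partial h_1/\partial\gamma$ is piecewise constant and, via (\ref{eq:dh1}) together with the stationarity relation (\ref{eq:exmu}), equal on each piece to $\tfrac12(a_{p,i}^T-a_{p,j}^T)$ evaluated along the optimal direction $v^\ast-w^\ast$, where $i=\sigma_p(v^\ast)$ and $j=\sigma_p(w^\ast)$ as in (\ref{eq:sigma}). As $\gamma$ increases, the optimal separation must eventually leave the affine cell of largest diameter, so the smallest slope attained by $h_1$ corresponds to a transition out of that maximal cell, i.e.\ with $i=\arg\max_q\diam(\Ccal_{p,q})$, into an adjacent cell $j$ with $\Ccal_{p,i}\cap\Ccal_{p,j}\neq\emptyset$. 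Minimizing over such neighbours $j$ yields precisely the slope $c_1$ in (\ref{eq:c1def}), and I would then set $c_0=c_1\,\diam(\Ccal_{p,.})$ to anchor the affine branch at the threshold $\gamma=\diam(\Ccal_{p,.})$; the monotonicity of $h_1(\gamma)/\gamma$ from Proposition~\ref{prop:h1prop} is what guarantees this minimal slope is the appropriate (gentlest) one to propagate.

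I expect the main obstacle to lie in this last step, specifically in reconciling the \emph{vector} gradient difference $\tfrac12(a_{p,i}^T-a_{p,j}^T)$ appearing in (\ref{eq:dh1})–(\ref{eq:exmu}) with the \emph{scalar} slope $c_1$ in (\ref{eq:c1def}). The true rate of change of $h_1$ is the projection of this gradient difference onto the unit direction $(v^\ast-w^\ast)/\gamma$, whose magnitude equals $\lvert 2\mu\rvert$ by (\ref{eq:exmu}), so some care is needed to argue that the scalar in (\ref{eq:c1def}) captures exactly the directional derivative $\partial h_1/\partial\gamma$. A second delicate point is justifying rigorously that the extremal smallest-slope piece is indeed governed by the maximal-diameter cell and its facet-adjacent neighbours, rather than by some other transition, and that the resulting single affine branch is the correct one to use for the conservative inversion of $h_1$ in the confidence-radius bound.
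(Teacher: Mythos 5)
Your proof of the lemma as literally stated is correct, but it succeeds by exposing that the statement is vacuous rather than by engaging with what the lemma is meant to assert, and this makes it genuinely different from the paper's route. Your two opening paragraphs already constitute a complete proof: on $[0,\diam(\Ccal_{p,.}))$, convexity of $f_p$ gives $J(v,w)\geqslant 0$ in (\ref{eq:jvw}), hence $h_1\geqslant 0=\hat h_1$; on $[\diam(\Ccal_{p,.}),+\infty)$, definition (\ref{eq:h1def}) sets $h_1=+\infty$, so any finite affine branch lies below it. No property of $c_1$ or $c_0$ is needed, and the inequality would hold for arbitrary finite coefficients. The paper instead proves the lemma in one sentence by citing Proposition~\ref{prop:h1is0}, Lemma~\ref{lem:dh1cont}, Lemma~\ref{lem:dh1pwc}, and Proposition~\ref{prop:h1prop}. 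Those ingredients only become necessary for the non-vacuous statement the authors evidently intend and that Theorem~\ref{thm:main} actually uses: a surrogate whose flat part ends at $\max_{q\in\Ical_{Q_p}}\{\diam(\Ccal_{p,q})\}$ (the point where, by Proposition~\ref{prop:h1is0}, $h_1$ stops being forced to zero, not $\diam(\Ccal_{p,.})$), with the affine branch anchored continuously there, i.e.\ $\hat h_1(\gamma)=c_1\left(\gamma-\max_{q}\{\diam(\Ccal_{p,q})\}\right)$; inverting that function at $2\Delta_p$ is exactly what yields the bound (\ref{eq:maxdes}).

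Your trivial argument cannot prove that intended version, and your third paragraph --- which mirrors the paper's intended reasoning --- stops precisely at the two points that would need proof: (i) $c_1$ in (\ref{eq:c1def}) is, as written, a vector difference of gradients, so one must show that the relevant scalar slope of $h_1$ is its projection onto the optimal direction $(v^\ast-w^\ast)/\gamma$, with magnitude $\vert 2\mu\vert$ as in (\ref{eq:exmu}); and (ii) the claim that the minimal slope is realized by transitions out of the maximal-diameter cell into cells adjacent to it is asserted, not derived. To your credit, you flag both issues explicitly, and both gaps are present in the paper as well: its proof of this lemma never reconciles the vector/scalar mismatch or justifies the adjacency restriction. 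In short, your proof is correct for the statement as posed, by a trivial route the paper does not take; for the statement the downstream theorem requires, your sketch is no more complete than the paper's own proof, but it identifies the missing steps more honestly.
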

\begin{proof}
	This can be directly deduced from Proposition~\ref{prop:h1is0}, considering the continuity of $h_1$ from Lemma~\ref{lem:dh1cont}, the piecewise-constant property of $\partial h_1 / \partial \gamma$ from Lemma~\ref{lem:dh1pwc}, and the increasing property of $h_1$ from Proposition~\ref{prop:h1prop}.
\end{proof}
\begin{figure*}[htb]\centering
\begin{tikzpicture}
\begin{axis}[width=\textwidth,height=0.22\textwidth,xmin=-530,xmax=530,ymin=-600,ymax=600,
    xtick={-385,-330,-180,180},ytick={-400,300},axis lines=center,clip=false,
    xticklabel style={rotate=90},extra x ticks={-512,512},
    extra x tick style={tick label style={yshift=5mm,xshift=2.5mm,anchor=south}},
    legend columns = 4,legend style = {at={(0.5, -0.1)},anchor=north, inner sep=2pt,style={column sep=0.2cm},draw=none},
    legend cell align=center,]
    \begin{pgfonlayer}{background}
        \fill[color=RedOrange!10] (axis cs:-512,-600) rectangle (axis cs:-385,600);
        \fill[color=YellowOrange!10] (axis cs:-385,-600) rectangle (axis cs:-330,600);
        \fill[color=Green!10] (axis cs:-330,-600) rectangle (axis cs:-180,600);
        \fill[color=Cerulean!10] (axis cs:-180,-600) rectangle (axis cs:180,600);
        \fill[color=Plum!10] (axis cs:180,-600) rectangle (axis cs:512,600);
    \end{pgfonlayer}
    \node [right] at (current axis.right of origin) {$x$};
    \node [above] at (current axis.above origin) {$y$};
    \node[rectangle,draw=RedOrange!20,fill=RedOrange!10] at (-560,500) {\small$\Ccal_{1,.}$};
    \node[rectangle,draw=YellowOrange!20,fill=YellowOrange!10] at (-560,250) {\small$\Ccal_{2,.}$};
    \node[rectangle,draw=Green!20,fill=Green!10] at (-560,0) {\small$\Ccal_{3,.}$};
    \node[rectangle,draw=Cerulean!20,fill=Cerulean!10] at (-560,-250) {\small$\Ccal_{4,.}$};
    \node[rectangle,draw=Plum!20,fill=Plum!10] at (-560,-500) {\small$\Ccal_{5,.}$};
    \addlegendimage{black,line legend,thick}
    \addlegendentry{Nonlinear function $F$}
    \addlegendimage{NavyBlue,line legend,thick}
    \addlegendentry{PWA function $f$}			
    \addlegendimage{Red,only marks,mark=square*}
    \addlegendentry{Regional minima of $F$}			
    \addlegendimage{NavyBlue, only marks, mark=*}
    \addlegendentry{Regional minima of $f$}			
    \addplot [smooth,domain=-512:512,samples=100,color=black,thick,name path=NL]
    {-47*sin(deg(sqrt(abs(0.5*x+47))))-x*sin(deg(sqrt(abs(x-47))))};
    \draw[thick,NavyBlue,smooth] (-512,-550)--(-490,-500)--(-400,270)--(-385,385)--(-330,208)--(-270,-250)--(-240,-280)--(-180,78)--(-66,-40)--(90,-20)--(180,190)--(512,-250);
    \draw[thin,densely dashed,Red] (-512,-550) 
    coordinate[rectangle,inner sep=2.5pt,fill=Red] node[above,Red] {$\hspace{4mm}\hat{x}^\ast_1$} -- (-512,0);
    \draw[thin,densely dashed,Red] (-330,220) 
    coordinate[rectangle,inner sep=2.5pt,fill=Red] node[left,Red] {$\hspace{4mm} \hat{x}^\ast_2$} -- (-330,0);
    \draw[thin,densely dashed,Red] (-250,-275) 
    coordinate[rectangle,inner sep=2.5pt,fill=Red] node[below,Red] {$ \hat{x}^\ast_3 $} -- (-250,0);
    \draw[thin,densely dashed,Red] (-90,-110) 
    coordinate[rectangle,inner sep=2.5pt,fill=Red] node[below,Red] {$\hat{x}^\ast_4 \hspace{3mm}$} -- (-90,0);
    \draw[thin,densely dashed,Red] (467,-420) 
    coordinate[rectangle,inner sep=2.5pt,fill=Red] node[below,Red] {$\hat{x}^\ast_5$} -- (467,0);
    \draw[thin,densely dashed,NavyBlue] (-512,-550) 
    coordinate[circle,inner sep=1.5pt,fill=NavyBlue] node[below,NavyBlue] {$x^\ast_1$} -- (-512,0);
    \draw[thin,densely dashed,NavyBlue] (-330,280) 
    coordinate[circle,inner sep=1.5pt,fill=NavyBlue] node[right,NavyBlue] {$x^\ast_2$} -- (-330,0);
    \draw[thin,densely dashed,NavyBlue] (-240,-280) 
    coordinate[circle,inner sep=1.5pt,fill=NavyBlue] node[right,NavyBlue] {$x^\ast_3$} -- (-240,0);
    \draw[thin,densely dashed,NavyBlue] (-66,-40) 
    coordinate[circle,inner sep=1.5pt,fill=NavyBlue] node[below,NavyBlue] {$x^\ast_4$} -- (-66,0);
    \draw[thin,densely dashed,NavyBlue] (512,-250) 
    coordinate[circle,inner sep=1.5pt,fill=NavyBlue] node[below,NavyBlue] {$x^\ast_5$} -- (512,0);
\end{axis}
\end{tikzpicture}
\caption{Plots of the nonlinear objective function $F$ and its PWA approximation $f$.}
\label{fig:Fandf}
\end{figure*}
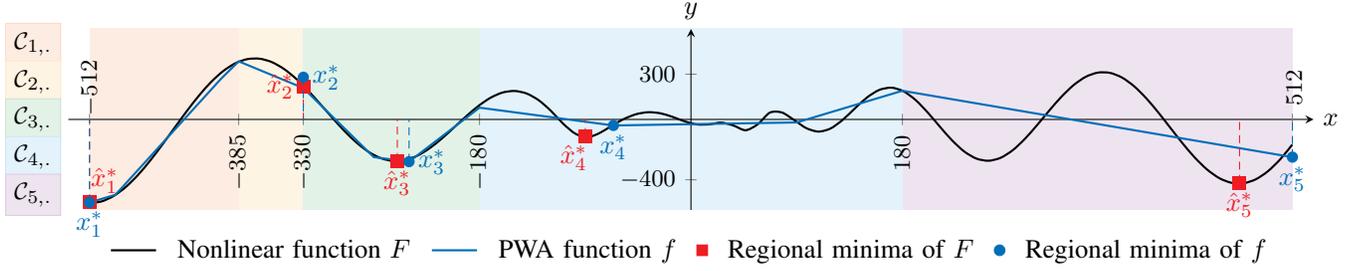
We are now in the position to state our main result:
\begin{thm}\label{thm:main}
	Let $F : \Dcal \to \mathbb{R}$ be a scalar-valued objective function and let $f$ be a continuous PWA function as in Definition~\ref{def:cpwa} that approximates $F$ with bounded approximation error $\delta = f - F$. Let $f_p$ in (\ref{eq:fp}) be the local convex segment of $f$ in its MMPS form (\ref{eq:fmmps}) on the set $\Ccal_{p,.}$, and let $\delta_p : \Ccal_{p,.} \to \mathbb{R}$ be the corresponding approximation error bounded by
	\[\sup_{x \in \Ccal_{p,.}} \vert \delta_p(x) \vert = \Delta_p < \infty.\]
	Let $x_p^\ast$ be any global minimizer of $f_p$ and $\hat{x}_p^\ast$ be any global minimizer of $F$ on $\Ccal_{p,.}$. Then, the following condition holds:
	\begin{align}\label{eq:maxdes}
		\Vert \hat{x}_p^\ast - x_p^\ast \Vert \leqslant \dfrac{2 \Delta_p }{c_1} + \max\limits_{q \in \Ical_{Q_p}} \left\{ \diam(\Ccal_{p,q})\right\},
	\end{align}
	where $c_1$ is defined in (\ref{eq:c1def}).
\end{thm}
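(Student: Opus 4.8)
The plan is to reduce (\ref{eq:maxdes}) to the abstract perturbation bound of Theorem~\ref{thm:phu} and then replace the implicitly-defined quantity $h_1^{-1}(2\Delta_p)$ by the closed form supplied by Lemma~\ref{lem:h1form}. First I would note that on the activation set $\Ccal_{p,.}$ the PWA function $f$ coincides with its $p$-th convex segment $f_p$, so the restriction of $F$ to $\Ccal_{p,.}$ can be written as $F = f_p + \delta_p$ with $\delta_p \coloneqq -\delta|_{\Ccal_{p,.}}$, whence $\sup_{x\in\Ccal_{p,.}}|\delta_p(x)| = \Delta_p$. Thus $F|_{\Ccal_{p,.}}$ plays exactly the role of the perturbed function $\hat f_p = f_p + \delta_p$ in Theorem~\ref{thm:phu}: $f_p$ is convex, $x_p^\ast$ minimizes $f_p$, and $\hat x_p^\ast$ minimizes $F$ (hence $\hat f_p$) on $\Ccal_{p,.}$, so the theorem yields $\|\hat x_p^\ast - x_p^\ast\| \le h_1^{-1}(2\Delta_p)$.

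Next I would convert this into the explicit right-hand side of (\ref{eq:maxdes}). Because $h_1$ vanishes on an initial interval (Proposition~\ref{prop:h1is0}), it is not strictly increasing there, so the symbol $h_1^{-1}$ has to be read as the generalized inverse $h_1^{-1}(y) = \sup\{\gamma \ge 0 : h_1(\gamma) \le y\}$, consistent with the confidence radius of Definition~\ref{def:chi}. The crucial observation is that the pointwise bound $\hat h_1 \le h_1$ from Lemma~\ref{lem:h1form} is reversed by this inverse: since $\{\gamma : h_1(\gamma)\le y\}\subseteq\{\gamma:\hat h_1(\gamma)\le y\}$, taking suprema gives $h_1^{-1}(y)\le \hat h_1^{-1}(y)$ for every $y\ge 0$. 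Applying this at $y=2\Delta_p$ and chaining with the first step yields $\|\hat x_p^\ast - x_p^\ast\| \le \hat h_1^{-1}(2\Delta_p)$.

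It then remains to evaluate $\hat h_1^{-1}(2\Delta_p)$ from the two-piece form of $\hat h_1$. By Proposition~\ref{prop:h1is0} and (\ref{eq:h1hat}), $\hat h_1$ is identically zero up to $\max_{q\in\Ical_{Q_p}}\diam(\Ccal_{p,q})$ and affine with positive slope $c_1$ thereafter, i.e.\ $\hat h_1(\gamma) = c_1\bigl(\gamma - \max_{q}\diam(\Ccal_{p,q})\bigr)$ on the increasing branch. Solving $c_1\bigl(\gamma - \max_{q}\diam(\Ccal_{p,q})\bigr) = 2\Delta_p$ gives $\gamma = \tfrac{2\Delta_p}{c_1} + \max_{q\in\Ical_{Q_p}}\diam(\Ccal_{p,q})$, which is exactly the claimed bound. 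I expect the main obstacle to be the rigorous handling of the generalized inverse: since both $h_1$ and $\hat h_1$ are flat at zero, ordinary inversion fails, and one must argue through sub-level sets that the order is reversed and that the relevant supremum is attained on the affine branch. The continuity of $h_1$ (Lemma~\ref{lem:dh1cont}) together with its monotonicity (Proposition~\ref{prop:h1prop}) is what makes each sub-level set an interval and the confidence radius well defined.
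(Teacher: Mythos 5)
Your proposal is correct and takes essentially the same route as the paper's own (one-sentence) proof: apply Theorem~\ref{thm:phu}, then replace $h_1^{-1}(2\Delta_p)$ by the explicit expression coming from Proposition~\ref{prop:h1is0} and Lemma~\ref{lem:h1form}; your careful handling of the generalized inverse and the order reversal $h_1^{-1}(y)\leqslant \hat{h}_1^{-1}(y)$ merely fills in details the paper leaves implicit. The only caveat is that you tacitly read the breakpoint in (\ref{eq:h1hat}) as $\max_{q\in\Ical_{Q_p}}\diam(\Ccal_{p,q})$ rather than the printed $\diam(\Ccal_{p,.})$, which is the reading consistent with Proposition~\ref{prop:h1is0} and with the bound (\ref{eq:maxdes}) itself, so your interpretation matches the paper's evident intent.
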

\begin{proof}
	This can be directly concluded by extending Theorem~\ref{thm:phu} via considering Proposition~\ref{prop:h1is0} and Lemma~\ref{lem:h1form}.
\end{proof}

\section{Case Study}\label{sec:case}
To showcase the application of Theorem~\ref{thm:main}, we select the 1-dimensional cut of the well-known Eggholder test function~\cite{Whitley1996} at \mbox{$x_2 = 0$} given by
\begin{align*}
	F (x) &= -47 \sin \left(\sqrt{ \left\vert\frac{x}{2} + 47 \right\vert}\right) - x \sin \left( \sqrt{\vert x - 47 \vert }\right).
\end{align*}
We approximate $F$ by a continuous PWA function $f$ that we arbitrarily take as
\begin{align*}
	f (x) &= \min_{p \in \Ical_5} \left(f_{p,.}\right).
\end{align*}
with local convex segments
\begin{subequations}
	\begin{align}
		f_{1,.} & = \max_{q \in \Ical_3} \left(f_{1,q}\right), & \Ccal_{1,.} &= [-512,-385],\\
		f_{2,.} & = f_{2,1}, & \Ccal_{2,.} &= [-385,-330],\\
		f_{3,.} & = \max_{q \in \Ical_3} \left(f_{3,q}\right), & \Ccal_{3,.} &= [-330,-180],\\
		f_{4,.} & = \max_{q \in \Ical_3} \left(f_{4,q}\right) & \Ccal_{4,.} &= [-180,180],\\
		f_{5,.} & =  f_{5,1}, & \Ccal_{5,.} &= [180,512].
	\end{align}
\end{subequations}
Figure~\ref{fig:Fandf} shows the plots for the nonlinear objective function $F$ and its PWA approximation $f$. The subregions $\Ccal_{p,.}$ with $p \in \Ical_5$ are illustrated by different colors. 
Theorem~\ref{thm:main} can be used in two ways: 
\begin{enumerate}
	\item guaranteeing bounds on the distance between the regional minima of $F$ and $f$ on each subregion, given a bound on the approximation error, and
	\item finding the required criteria for the approximation to obtain a desired bound on the distance between these minima, which we refer to as the confidence radius.
\end{enumerate} 
We discuss each case separately by focusing on the approximation on $\Ccal_{3,.}$.

\subsection{Case 1: Finding the Confidence Radii}
Figure~\ref{fig:Ccal3F} shows $F$ and two PWA approximations on $\Ccal_{p,.}$ with two approximation errors. The first is $f_3^{(1)}$, which divides $\Ccal_{3,.}$ into 3 subregions with maximum approximation error $\Delta_3^{(1)} = 19.9$, and which is given by
\begin{align*}
	f_3^{(1)} (x) = \max \left\{ \begin{array}{c}
		-7.8 x -2365.7 \\ -0.9 x -501.2 \\ 6.1 x + 1176.1
	\end{array}\right\}.
\end{align*}
The second approximation is $f_3^{(2)}$ with 8 affine pieces, the maximum error $\Delta_3^{(2)} = 2.6$, and defined as
\begin{align*}
	f_3^{(2)} (x) = \max \left\{ \begin{array}{c}
		-8.6 x -2613.1 \\ -6.8 x -2095.6 \\ -4.6 x - 1477.9 \\ -2.2 x -829.8 \\ 0.3 x - 191.6 \\ 2.8 x + 412.5 \\ 5.1 x + 944 \\ 6.9 x + 1348.1
	\end{array}\right\}.
\end{align*}
The inverses of the corresponding convexity moduli are shown in Fig.~\ref{fig:Ccal3h1inv} in the same color as their corresponding $f$ in Fig.~\ref{fig:Ccal3F}, where $\chi$ is the confidence radius in Definition~\ref{def:chi}.

Using Theorem~\ref{thm:main}, the confidence radius for $f_3^{(1)}$ by is obtained by \mbox{$\chi^{(1)} = 70$}, which is the same value obtained by finding $h_1$ and its inverse function using~(\ref{eq:h1def}), which is presented in Fig.~\ref{fig:Ccal3h1inv}. The same process can be performed for the second approximation, $f_3^{(2)}$, which gives $\chi^{(2)} = 44.3$. Note that Theorem~\ref{thm:main} is more conservative for larger values of $\Delta$, compared to directly using the definition of $h_1$. For instance, if \mbox{$\Delta_3^{(2)} = 12.5$}, employing Theorem~\ref{thm:main} leads to $\chi^{(2)} = 65.7$, while the computed confidence radius using $h_1$ is $58.4$. The areas within the confidence radii for the PWA approximations are highlighted on the $x$-axis in Fig.~\ref{fig:Ccal3F} as well. 

\vspace{-10pt}
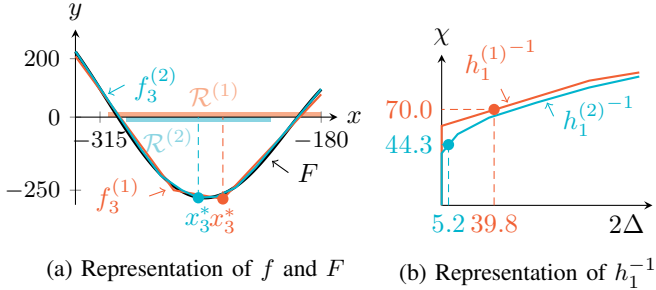
\begin{figure}[htbp]
\begin{center}
\begin{subfigure}[t]{0.52\linewidth}\centering
\begin{tikzpicture}	\hspace{-10pt}	
	\begin{axis}[width=1.1\textwidth,height=0.85\textwidth,xmin=-330,xmax=-170,ymin=-300,ymax=300,
		xtick={-315,-180},ytick={-250,0,200},
		axis x line=center,clip=false,axis y line=left,
		legend cell align=center,]
		\begin{pgfonlayer}{background}
			\fill[color=Turquoise!50] (axis cs:-299,-15) rectangle node[below,xshift=-2.5ex] {$\Rcal^{(2)}$} (axis cs:-211,0);
			\fill[color=RedOrange!50] (axis cs:-310,0) rectangle node[above] {$\Rcal^{(1)}$} (axis cs:-180,15);
		\end{pgfonlayer}
		\node [right] at (current axis.right of origin) {$x$};
		\node [above] at (axis cs:-330,300) {$y$};
		\addplot [smooth,domain=-330:-180,samples=50,color=black,very thick,name path=NL]
		{-47*sin(deg(sqrt(abs(0.5*x+47))))-x*sin(deg(sqrt(abs(x-47))))};
		\draw[thick,RedOrange,smooth] (-330,208)--(-270,-250)--(-240,-280)--(-180,78);
		\draw[thick,Turquoise,smooth] (-330,225)--(-290,-110)--(-278,-200)--(-265,-250)--(-255,-276)--(-244,-272)--(-232,-240)--(-220,-172)--(-180,96);
		\draw[thin,densely dashed,RedOrange] (-240,-280) 
		coordinate[circle,inner sep=1.5pt,fill=RedOrange] node[below,RedOrange] {$x^\ast_3$} -- (-240,0);
		\draw[thin,densely dashed,Turquoise] (-255,-276) 
		coordinate[circle,inner sep=1.5pt,fill=Turquoise] node[below,Turquoise] {$x^\ast_3$} -- (-255,0);
        \draw[<-,RedOrange] (-275,-250) -- (-285,-270) node[left] {$f_3^{(1)}$};
        \draw[<-,Turquoise] (-310,80) -- (-300,110) node[right] {$f_3^{(2)}$};
        \draw[<-,black] (-210,-150) -- (-200,-190) node[right] {$F$};
	\end{axis}
\end{tikzpicture}
\subcaption{Representation of $f$ and $F$}\label{fig:Ccal3F}
\end{subfigure}			
\begin{subfigure}[t]{0.45\linewidth}\centering
\begin{tikzpicture}		\hspace{-5pt}
	\begin{axis}[width=1.1\textwidth,height=0.9\textwidth,xmin=0,xmax=160,ymin=0,ymax=110,
		xtick=\empty,ytick=\empty,
		axis x line=bottom,clip=false,axis y line=left,]
		\node [right,yshift=-7pt,xshift=-20pt] at (current axis.right of origin) {$2 \Delta$};
		\node [above] at (axis cs:0,110) {$\chi$};
		\draw[thick,RedOrange,smooth] (0,0)--(0,58)--(112,91)--(150,97);
		\draw[thick,Turquoise,smooth] (0,0)--(0,38)--(12.2,52)--(36,64)--(72,75)--(112,86)--(150,94);
		\draw[thin,densely dashed,RedOrange] (39.8,0) node[below] {$39.8$} -- (39.8,70) 
		coordinate[circle,inner sep=1.5pt,fill=RedOrange] -- (0,70) node[left] {$70.0$};
		\draw[thin,densely dashed,Turquoise] (5.2,0) node[below] {$5.2$} -- (5.2,44.3) 
		coordinate[circle,inner sep=1.5pt,fill=Turquoise] -- (0,44.3) node[left] {$44.3$};	
        \draw[<-,RedOrange] (50,75) -- (45,85) node[above] {${h_1^{(1)}}^{-1}$};
        \draw[<-,Turquoise] (80,75) -- (85,65) node[right] {${h_1^{(2)}}^{-1}$};
	\end{axis}
\end{tikzpicture}
\subcaption{Representation of $h_1^{-1}$}\label{fig:Ccal3h1inv}
\end{subfigure}			
\caption{Comparison of two different PWA approximations of the nonlinear function on $\Ccal_{3,.}$.}\label{fig:Ccal3}
\end{center}
\end{figure}
\vspace{-5pt}

\subsection{Case 2: Finding the Approximation Criteria}
In this case, we approach the problem from another direction: we select a desired confidence radius $\chi^{(3)}$ and find the required criteria for the corresponding approximated function, $f_3^{(3)}$. Let the desired \mbox{$\chi^{(3)} = 15$}; then, 
\[\dfrac{2 \Delta_3^{(3)} }{c_1} + \max\limits_{q \in \Ical_{Q_p}} \left\{ \diam(\Ccal_{3,q})\right\} \leqslant 10,\]
which means the diameter of the largest subregion $\Ccal_{3,.}$ must be smaller than 10. Firstly, given that \mbox{$\diam (\Ccal_{3,.}) = 150$}, it can be concluded that the PWA approximation requires at least 10 partitions. We can then start the approximation by partitioning $\Ccal_{p,.}$ into 15 subregions with the same diameter and find the lowest possible error bound $\Delta_3^{(3)}$ for the approximation, which is obtained as $2.83$ with $c_1 = 0.0072$. For this approximation, $\chi^{(3)}$ already exceeds $\diam(\Ccal_{3,.})$. 

To improve upon this example, we add another partition to reduce the largest partition diameter further and this time we do not aim at partitions of $\Ccal_{3,.}$ with the same diameter, but require 
\[\max\limits_{q \in \Ical_{16}} \{\diam \left( \Ccal_{3,q} \right)\} \leqslant 10. \]
We find $\Delta_3^{(3)} = 2.47$ with $c_1 = 1.03$ and 
\[\max\limits_{q \in \Ical_{16}} \{\diam \left( \Ccal_{3,q} \right)\} = 9.4. \]
For this values, we obtain $\chi^{(3)} = 14.24$. In case this value is acceptable, we can use the corresponding PWA approximation while ensuring that the minimizer of $F$ on $\Ccal_{3,.}$ lies in a ball or radius $14.24$ around the minimizer of $f^{(3)}$. In case a tighter confidence radius is desired, the same procedure can be followed by adding more subregions.

\subsection{Application for NMPC Optimization}
To showcase the application of our proposed method in PWA approximation to control optimization problems, we use an inverted-pendulum NMPC problem as in~\cite{Gros2020}. The considered prediction horizon is $N_{\mathrm{p}} = 2$ and the initial rotation velocity is set as $\dot{\theta} = 0$. The objective function $J_{\mathrm{NMPC}}$ can be formulated as a function of the measured pendulum angle $\theta_k$ and the control inputs $u_k$ and $u_{k+1}$. For instance, for $\theta_k = 0$ we have 
\begin{multline*}
	J_{\mathrm{NMPC}} (0, u_k, u_{k+1}) = 
	\sqrt{ \left( 0.02 u_k + \pi\right)^2+2 \pi^2} \\
	+ 0.02 \sqrt{u_k^2 + \left(u_k + u_{k+1}\right)^2}+0.01\sqrt{u_k^2+u_{k+1}^2}.
\end{multline*}
Moreover, the feasible region is defined as the box constraint $\vert u_{k+i-1}\vert \leqslant 20N$, $i \in \Ical_{2}$ with $\diam(\Ccal_{1,.}) = 56.4$. 

We approximate $J_{\mathrm{NMPC}}$ by two convex MMPS forms $f^{(1)}$ and $f^{(2)}$ -- with $P^{(1)} = P^{(2)} = 1$ in (\ref{eq:fmmps}) -- with different complexities in terms of the number of affine functions as

\begin{align*}
	&Q^{(1)} = 4, \quad \Delta^{(1)} = 0.19, \quad \max\limits_{q \in \Ical_{4}} \{\diam \left( \Ccal^{(2)}_{1,q} \right)\} = 28.2,\\
	&Q^{(2)} = 24, \quad \Delta^{(2)} = 0.01, \quad \max\limits_{q \in \Ical_{24}} \{\diam \left( \Ccal^{(2)}_{1,q} \right)\} = 14.6.
\end{align*}
\begin{figure}[htbp]
\begin{center}
\begin{subfigure}[t]{0.28\textwidth}\centering
\begin{tikzpicture}		
    \begin{axis}[width=\textwidth,height=0.7\textwidth,xmin=0,xmax=1,ymin=0,ymax=65,
    xtick={0.02,0.89},ytick={14.6,28.3,56.4},xticklabel style={/pgf/number format/fixed},
    axis x line=bottom,clip=false,axis y line=left,legend style = {at={(0.6, 0.7)},anchor=north, inner sep=2pt,style={column sep=0.2cm},draw=none,fill=none},
    legend cell align=center,]
    \draw[thick,NavyBlue,smooth] (0,0)--(0,28.3)--(0.19,44)--(0.29,47)--(0.55,51)--(0.73,54)--(0.89,56.4);
    \node [right] at (current axis.right of origin) {$2 \Delta$};
    \node [above] at (axis cs:0,66) {$\chi$};
    \draw[thin,densely dashed,NavyBlue] (0,56.4) -- (0.89,56.4) -- (0.89,0); 
    \draw[thick,Maroon,smooth] (0,0)--(0,12.1)--(0.10,25)--(0.19,33)--(0.27,42)-- (0.37,48) --(0.76,56.4);
    \draw[thin,densely dashed,Maroon] (0,14.6) -- (0.02,14.6) -- (0.02,0); 
    \addlegendimage{NavyBlue,line legend,thick}
    \addlegendentry{${h_1^{-1}}^{(1)}$};	
    \addlegendimage{Maroon,line legend,thick}
    \addlegendentry{${h_1^{-1}}^{(2)}$};
    \end{axis}
\end{tikzpicture}
\subcaption{Confidence radii}\label{fig:nmpch1}
\end{subfigure}	
\begin{subfigure}[t]{0.2\textwidth}\centering
\begin{tikzpicture}		
    \begin{axis}[width=\textwidth,height=\textwidth,xmin=-35,xmax=35,ymin=-35,ymax=35,xtick={0},ytick={0},xticklabel style={/pgf/number format/fixed},axis x line=middle,clip=false,axis y line=middle,]
    \draw[NavyBlue,thick,fill=NavyBlue!10,fill opacity=0.7] (axis cs:0,0) circle [radius=28.2];
    \draw[Maroon,thick,fill=Maroon!10,fill opacity=0.7] (axis cs:0,0) circle [radius=14.6];
    \draw[thick,black,smooth] (-20,-20)--(-20,20)--(20,20)--(20,-20)--(-20,-20);
    \node [right] at (current axis.right of origin) {$u_k$};
    \node [above] at (axis cs:0,35) {$u_{k+1}$};
    \addplot[only marks,mark=x] coordinates {(0,0)};
    \end{axis}
\end{tikzpicture}
\subcaption{Confidence regions}\label{fig:nmpch1reg}
\end{subfigure}	
\caption{Comparison of two PWA approximations of NMPC.}\label{fig:nmpc}
\end{center}
\end{figure}
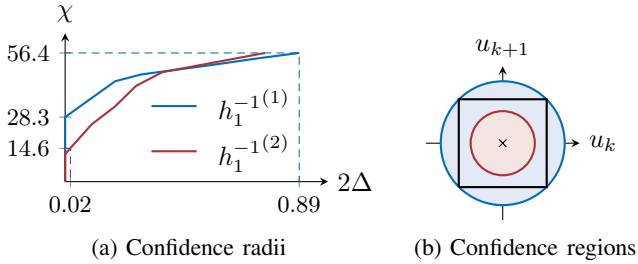

The inverse of the convexity modulus and the corresponding confidence regions for both approximations are shown in Fig.~\ref{fig:nmpc}. While $f^{(1)}$ has a low approximation error, its complexity level does not allow to guarantee a confidence radius lower than the diameter of the feasible region. However, the more accurate approximation $f^{(2)}$ guarantees a smaller confidence radius. Moreover, a general approximation criterion can be obtained, similar to the Eggholder NLP example, for an NMPC problem. In this case, it can be observed from (\ref{eq:maxdes}) and Fig.~\ref{fig:nmpch1} that $\chi$ is lower-bounded by the maximum subregion diameter. Therefore, if a particular confidence radius is desired, the approximation problem (\ref{eq:approxprob}) can be solved while imposing constraints on the diameter of subregions, e.g.\ an upper bound on the maximum subregion diameter.

\subsection{Discussion}
The Eggholder function and the NMPC case studies are two potential applications with, respectively, 1- and 3-dimensional domains to illustrate the theoretical results. Nevertheless, our proposed approach is applicable in higher dimensions since we do not assume any bounds on the dimension of $\Dcal$ throughout this paper. Moreover, the following aspects should be considered when applying the proposed approach:
\begin{itemize}
	\item The conservatism of the current approach in obtaining the confidence radii can be high, which can be seen in the case studies. 
	\item The complexity in determining the confidence radii is irrelevant to the dimension of the optimization problem. However, larger dimensions increase the computation time and range of $\gamma$ values for which (\ref{eq:h1def}) needs to be solved. Nevertheless, since this problem is solved offline, higher computation time do not limit the applicability of our approach.
\end{itemize}

\section{Conclusions}\label{sec:conc}

This paper has introduced a novel approach for bounding the minimizers of polytopically-constrained NLPs with nonlinear continuous objective functions using continuous PWA function approximations. We have leveraged the continuity of the PWA approximations resulting from employing an MMPS formalism to construct a locally-convex representation of the PWA approximation, thus facilitating the derivation of guaranteed bounds on the distance between the original and the approximated optimal solutions of the NLP by considering the maximal approximation error. Our approach offers a practical tool for determining criteria to achieve desired solution bounds. The effectiveness of the method has been demonstrated through two case studies on the Eggholder function and NMPC of an inverted pendulum, highlighting the practical application of the proposed method and its potential impact in optimization and optimal control.

For future work, our primary objectives are conducting an in-depth analysis of the conservatism of our approach, refining our sensitivity analysis, and extending our method to NLPs with non-convex constraints. Moreover, we aim to do more comparative studies to gain insight into the impacts of the improved computational efficiency through PWA approximation in light of the corresponding solution bounds. Finally, investigating the effects of probabilistic error bounds would be an interesting direction to help integrate our approach into learning-based and data-driven applications.

\bibliographystyle{unsrt}        
\bibliography{Citations}   

\end{document}